\newtcbox{\myshadowbox}[1][]{colframe=black!75!white, colback=yellow!10, boxrule=1pt, arc=4mm, auto outer arc, drop shadow, #1}
\title{Active Large Language Model-based Knowledge Distillation for Session-based Recommendation}
\author{
    %Authors
    % All authors must be in the same font size and format.
    Yingpeng Du\textsuperscript{\rm 1},
    Zhu Sun\textsuperscript{\rm 2}\thanks{Corresponding authors.},
    Ziyan Wang\textsuperscript{\rm 1},
    Haoyan Chua\textsuperscript{\rm 1},
    Jie Zhang\textsuperscript{\rm 1},
    Yew-Soon Ong\textsuperscript{\rm 1,\rm 3}
}
\title{My Publication Title --- Single Author}
\author {
    Author Name
}
\title{My Publication Title --- Multiple Authors}
\author {
    % Authors
    First Author Name\textsuperscript{\rm 1,\rm 2},
    Second Author Name\textsuperscript{\rm 2},
    Third Author Name\textsuperscript{\rm 1}
}
\begin{document}
\nocopyright
\maketitle

\begin{abstract}
Large language models (LLMs) provide a promising way for accurate session-based recommendation (SBR), but they demand substantial computational time and memory. Knowledge distillation (KD)-based methods can alleviate these issues by transferring the knowledge to a small student, which trains a student based on the predictions of a cumbersome teacher. However, these methods encounter difficulties for \textit{LLM-based KD in SBR}. 1) It is expensive to make LLMs predict for all instances in KD. 2) LLMs may make ineffective predictions for some instances in KD, e.g., incorrect predictions for hard instances or similar predictions as existing recommenders for easy instances. In this paper, we propose an active LLM-based KD method in SBR, contributing to sustainable AI. To efficiently distill knowledge from LLMs with limited cost, we propose to extract a small proportion of instances predicted by LLMs. Meanwhile, for a more effective distillation, we propose an active learning strategy to extract instances that are as effective as possible for KD from a theoretical view. Specifically, we first formulate gains based on potential effects (e.g., effective, similar, and incorrect predictions by LLMs) and difficulties (e.g., easy or hard to fit) of instances for KD. Then, we propose to maximize the minimal gains of distillation to find the optimal selection policy for active learning, which can largely avoid extracting ineffective instances in KD. Experiments on real-world datasets show that our method significantly outperforms state-of-the-art methods for SBR.\end{abstract}

% Uncomment the following to link to your code, datasets, an extended version or similar.
%
% \begin{links}
%     \link{Code}{https://aaai.org/example/code}
%     \link{Datasets}{https://aaai.org/example/datasets}
%     \link{Extended version}{https://aaai.org/example/extended-version}
% \end{links}
\section{Introduction}

% Session-based recommendation (SBR) \cite{chen2022autogsr,chen2023knowledge,li2023exploiting} aims to predict the next item that users may prefer based on their previous interactions within a session. In real-world scenarios, the users' interactions with items are often scarce and their profiles are inaccessible within anonymous sessions, which hinders accurate SBR \cite{meng2020incorporating,sun2019research}. 
Recently, large language models (LLMs) have shown the potential to equip recommender systems (RSs) with their extensive knowledge and powerful reasoning capabilities \cite{zhao2023survey}. Most of the existing methods employ LLMs as recommenders based on sophisticated prompt engineering \cite{gao2023retrieval} and fine-tuning with domain-specific knowledge \cite{wu2023survey}.
However, employing LLMs as recommenders usually demands substantial computational time and memory, leading to a high latency and computation requirement during the serving time and limiting real-world applications (e.g., on-device session-based recommendation (SBR) \cite{xia2022device}).  

Intuitively, existing knowledge distillation (KD)-based recommendation methods \cite{tang2018ranking,lee2019collaborative} seem to be potential solutions to these issues, which applied a model-agnostic KD technique to reduce model size while preserving the performance by training a lightweight recommender. Specifically, they train a new compact model (i.e., student recommender) based on the predictions of a well-trained cumbersome model (i.e., teacher recommender).
However, it poses significant challenges for \textit{LLM-based KD} in SBR.  Firstly, \textit{these methods lead to extreme cost, thus restricting the efficiency of KD.}  Specifically, the inference of LLMs is more expensive w.r.t. time and resources than conventional recommenders. Predicting for all instances  (e.g., all sessions in SBR) with LLMs is not a sustainable solution. Secondly, \textit{LLMs may make ineffective (e.g., similar and incorrect) predictions for some instances thus hindering the effectiveness of KD.} Specifically, LLMs may make similar predictions as conventional recommenders in some instances, therefore these instances provide limited information to KD and potentially lead to a redundant distillation. In addition, some instances are hard to correctly predict by LLMs (i.e., failing to model users' real preferences), which may mislead the student recommender.

%  Specifically, different instances usually exhibit different effects for LLM-based KD, including effective  

% which are unrecognizable before predicted by the LLM teacher. 

% are easily predicted by ordinary recommenders hard-predicted

To address the first challenge, we propose only extracting a small proportion of instances to be predicted by the LLM teacher. Compared to most of the existing KD methods, our strategy does not demand LLM predicting for all instances, which can significantly reduce the cost of KD. To address the second challenge, we propose to conduct active learning to elicit effective LLM-based KD. Specifically, we first define the difficulty of instances based on whether their embedded knowledge can be easily learned by the recommender. Typically, an easy instance may be less informative (e.g., low gain) for LLM-based KD, because its knowledge is likely to be already captured by the student recommender. Although a hard instance may contribute informative knowledge, such an instance may also easily lead to an incorrect prediction (e.g., negative gain) by LLMs when it contains much noise. Meanwhile, we also identify the three potential effects of instances when they are effectively, similarly, and incorrectly predicted by the LLM in ranking tasks, abbreviated as effective/similar/incorrect instances. Then, we formulate their gains with consideration of both their difficulties and effects. Finally, we maximize the minimal gains of distillation to collect effective instances by finding the optimal selection policy for KD. We theoretically and empirically prove that the proposed active learning strategy can select effective instances for KD, that is, selecting as informative as possible instances while avoiding incorrect or similar predictions by LLMs. Our method is a tailored and sustainable solution to LLM-based KD for SBR.

\textbf{Contribution.} In summary, we propose an \underline{A}ctive \underline{L}LM-based \underline{K}nowledge \underline{D}istillation  \underline{Rec}ommendation method, named ALKDRec, as a sustainable yet effective solution to SBR. For more efficient LLM-based KD, we propose to elicit student learning from a small proportion of instances. To ensure effective LLM-based KD theoretically, we propose to maximize the minimal gains of distillation by selecting effective instances for KD. To the best of our knowledge, our method is the first trial to theoretically and practically distill knowledge from LLMs to enhance lightweight recommenders with active learning for efficiency and effectiveness purposes. Experiments on real-world datasets show that our method significantly outperforms state-of-the-art KD methods with representative recommendation backbones. 
% Ablation experiments further substantiate the motivations and effectiveness behind our proposed method.

\section{Related Work}
%\subsection{LLM-based recommendation}
\textbf{LLM-based recommendation}.
Leveraging LLMs for recommendation has attained popularity recently due to their advanced capabilities \cite{wu2023survey}. 
Typically, two paradigms have been widely applied, i.e., LLM-as-extractor and LLM-as-recommender. The former paradigm leverages the LLM as a feature extractor for downstream recommendation tasks, employing LLMs to capture and infer user profiles~\cite{zheng2023generative,du2024enhancing}, item descriptions~\cite{ren2024representation,wei2024llmrec}, and other textual data~\cite{liu2024once,geng2022recommendation}. However, LLMs in these methods are not directly designed for recommendation, which may lead to a gap between extracted knowledge and recommendation tasks. 
The latter paradigm adopts the LLM as a predictor to directly provide recommendations for users. They usually employ prompt techniques such as in-context learning~\cite{hou2024large,dai2023uncovering}, prompt tuning~\cite{sun2024large,kang2023llms}, and parameter fine-tuning~\cite{shi2023preliminary,luo2024integrating} to trigger LLMs directly generate recommendations. However, these methods demand substantial computational time and memory, which leads to high latency and increased computation requirements during the serving time. 
% To this end, we propose an LLM-based KD method to reduce model size while preserving recommendation performance.

% attempts to train a simpler student model to mimic a more complex teacher model, aiming to achieve similar performance with limited computation source and fast response speed. The KD gou2021knowledge,fu2021lrc,

%\subsection{Knowledge distillation in recommendation}
 % KD \cite{hinton2015distilling,gou2021knowledge}  technology has been widely applied in various domains such as visual recognition \cite{wang2021knowledge}, and natural language processing \cite{xu2024survey}, which can achieve competitive performance with limited computation source and fast response speed. 
% For instance, in visual recognition, knowledge distillation has been applied to tasks like image classification \cite{li2017learning}, object detection \cite{shmelkov2017incremental}, and image segmentation \cite{he2019knowledge}. 
\smallskip\noindent\textbf{Knowledge distillation in recommendation}.
Recently, Knowledge distillation \cite{kang2024knowledge} has attracted considerable attention in recommendation, which contributes to the low latency and decreased computation requirement. Most of these methods first train a large teacher recommender and then transfer the knowledge from the teacher to the target compact student recommender, which mainly utilizes soft labels of teacher predictions for KD \cite{lee2019collaborative,pan2019novel,kang2021topology,huang2023aligning}. 
% For example, \citet{tang2018ranking} propose to treat the top-N ranked items in the teacher's prediction as positive samples to train a compact student recommender. 
Furthermore, advanced KD technologies are adopted for recommendation such as adversarial distillation \cite{chen2018adversarial,wang2018kdgan}, bidirectional distillation \cite{kweon2021bidirectional,lee2021dual}, privileged feature distillation \cite{wang2021privileged,xu2020privileged,zhang2020distilling,liu2023future,deng2023bkd}, debias distillation \cite{chen2023unbiased}, multi-modal/model distillation \cite{liu2023semantic,wei2024multi,kang2023distillation,sun2024distillation}, and sparsity distillation \cite{wang2024dynamic}, and LLM-based distillation \cite{wang2024can,liu2024large,cui2024distillation}. However, these methods are not tailored for efficient LLM-based KD due to (1) extreme costs with LLM inference (e.g., fine-tuning or distilling knowledge based on all instances); and (2) ineffective predictions by LLMs for KD in SBR.
% To this end, we propose an active LLM-based KD method to ensure effective knowledge transferring with limited cost theoretically and practically.

% \subsection{Active learning in recommender system}

\section{The Proposed Method}

\begin{figure*} \centering
 \includegraphics[width=0.8\textwidth]{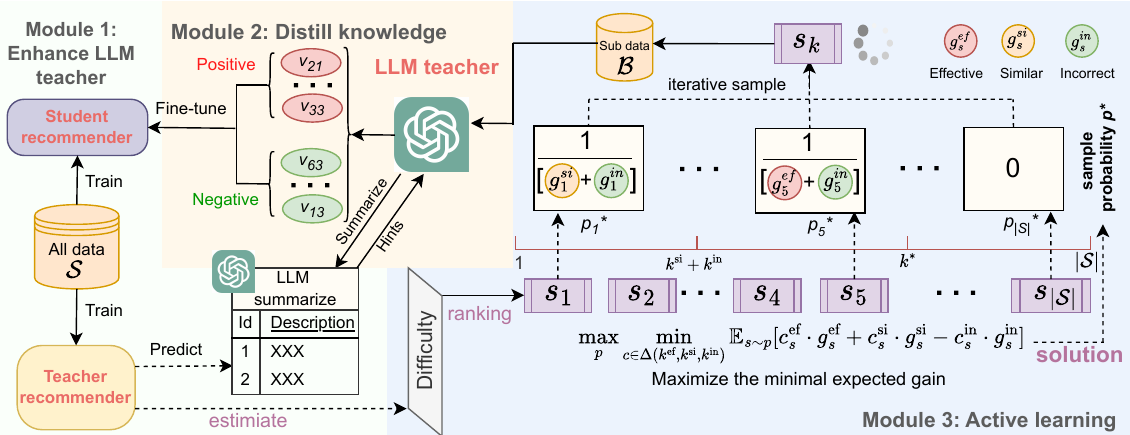}
\caption{The architecture of the ALKDRec method.} \label{Method}
\end{figure*}

%\subsection{Problem Definition}
\textbf{Problem Definition}.
Let $\mathcal{V} = \{v_1, v_2, ..., v_{N}\}$ represent the item set with $N$ items. Each anonymous session $s\in\mathcal{S}$ records the user's interaction with $l$ items, which is denoted as $s = \{v_1^s, v_2^s, ..., v_l^s|v_i^s \in \mathcal{V}\}$. We assume to know the titles of items. Given $s$, the task of SBR is to rank all items that have not interacted yet and provide a top-$K$ item list for the next item $v_{l+1}^s$ prediction. Our goal is to design a KD method in SBR, enabling the student recommender to effectively and efficiently learn from the LLM teacher with small-scale parameters (e.g., low dimension in latent space) and achieve accurate recommendations.

\smallskip\noindent\textbf{Model Overview}.
 The overall architecture of the proposed method is shown in Figure \ref{Method}. Firstly, to make the \textit{LLM teacher} align with domain-specific knowledge, we elicit it to summarize from predictions of the \textit{conventional teacher recommender} that is trained by domain-specific user behaviors (Module 1). Secondly, to distill knowledge from the LLM teacher, we trigger the \textit{student recommender} to learn from its prediction (Module 2). Finally, to efficiently distill knowledge from the LLM teacher, we propose an active learning strategy to select effective instances for KD theoretically and practically (Module 3).

% , which contains two teachers (i.e., an LLM teacher and a conventional recommender teacher) and a recommender student

% where a teacher recommender and an LLM teacher can collaborate to distill knowledge into a student recommender

\subsection{Enhance LLM teacher with conventional teacher recommender}
LLMs usually lack domain-specific knowledge (e.g., user behavior patterns) related to the SBR tasks, which may hinder the effective KD of LLMs. To make LLMs align with domain-specific knowledge, we elicit the LLM to make summarization with the help of the conventional teacher recommender {\small$T_\text{rec}(\cdot)$} which is trained on all instances $\mathcal{S}$, and can capture the collaborative signals behind users' behaviors. Then, we elicit the LLM teacher {\small$T_\text{llm}(\cdot)$} to summarize users' behavior patterns using prompt {\small$prompt_\text{smr}$} as hints by analyzing predictions of the teacher recommender, 
\begin{equation*}
      {\small  Hints = T_\text{llm}(prompt_\text{smr},\{rank^{rec}_{s_1},\cdots,rank^{rec}_{s_L}|s_i\in\mathcal{S}\}),}
\end{equation*}

\noindent where {\small $rank^{rec}_{s} = T_\text{rec}(s)$} denotes the predictions of teacher recommender on session $s$. The detailed prompt {\small$prompt_\text{smr}$} for \underline{s}um\underline{m}a\underline{r}ization is shown in Appendix A. With summarized hints encapsulated with domain-specific knowledge, we can trigger the LLM teacher to make recommendations as a ranking list  {$\small rank_{s}^{llm} = T_\text{llm}(Prompt_\text{dst},s, Hints,\mathcal{C}_s),$} where $prompt_\text{dst}$ is the prompt that triggers {\small$T_\text{llm}(\cdot)$} for SBR shown in Appendix A. {\small$\mathcal{C}_s = rank_{s}^{rec}[1:\kappa]$} is the candidate set of $\kappa$ (e.g., $\kappa=50$) top items suggested by {\small$T_\text{rec}(\cdot)$}.

\subsection{Distill knowledge from the LLM teacher}
To distill knowledge from the LLM teacher into the student, we fine-tune the student recommender $S_\text{rec}(\cdot)$ that is well-trained based on all instances $\mathcal{S}$, thus making the student recommender learn to imitate the prediction of the LLM teacher, i.e., $rank_s^{llm}$, with pair-wise loss, 
\begin{equation}\label{distill}
{\small max\sum_{s\in\mathcal{S}} \sum_{v\in rank_s^{llm}}\alpha_{v}\cdot\log\sigma(S_\text{rec}(s,v)-S_\text{rec}(s,v'))},
\end{equation}

\noindent where $S_\text{rec}(s,v)$ denotes the predicted score for item $v$ in session $s$ by the student recommender, and $v'$ denotes the negative item. $\alpha_v$ denotes the weight of pair-wise loss w.r.t. positive sample $v$, where we assign the item $v$ to the top position with a higher weight.  

Existing KD strategies which learn from teachers regarding all instances $\mathcal{S}$ are extremely costly for LLM-based KD, as the inference of LLM teachers is much more expensive than conventional teachers. To this end, we propose to distill knowledge from a subset of all instances, i.e., $\mathcal{B} \subset \mathcal{S}$ where $\vert\mathcal{B}\vert<<\vert\mathcal{S}\vert$. However, how to select an effective instance subset is crucial for LLM-based KD in SBR.

\subsection{Active learning for LLM-based distillation}

\subsubsection{Difficulty of instances in distillation.} For efficient LLM-based KD, it is intuitive to query the LLM teacher with informative instances, the patterns of which are hardly captured by conventional recommenders. To be specific, if the knowledge of an instance can be easily captured by conventional recommenders, we regard it as an \textbf{easy instance} for LLM-based KD. Hence, employing the LLM teacher distill on these easy instances is likely to lead to redundant distillation. Contrarily, the \textbf{hard instances} may provide informative knowledge for KD, but it is hard to predict correctly by the LLM teacher. To measure the difficulty of instances, we propose to assess the negative consistency between the session representation and each item within the session in the conventional teacher recommender, i.e., ${\small df_s = - \frac{1}{|s|}\sum\nolimits_{v\in s}\sigma(\text{Encode}(s)^T\cdot \boldsymbol{e}_{v}),}$
where ${\small\text{Encoder}(s) \in \mathbb{R}^d}$ and ${\small \boldsymbol{e}_{v}\in\mathbb{R}^d}$ are the embeddings of session $s$ and item $v$, respectively. Thus, a large negative consistency (i.e., greater difficulty) indicates that the teacher recommender can hardly capture the knowledge. 

\subsubsection{Effects of instances in distillation.} 

Instances may show different effects in distillation when distilling knowledge from the LLM teacher, which can be categorized as follows.
 \textbf{\textit{Effective instances}} indicate that the LLM teacher makes accurate predictions but their patterns are hardly captured by conventional recommenders, thus usually providing informative knowledge to KD.
\textbf{\textit{Similar instances}} indicate that the LLM teacher and the student recommender share similar predictions, which usually provide limited informative knowledge to KD. 
 \textbf{\textit{Incorrect instances}} indicate that LLM makes incorrect predictions on users' behaviors, which may mislead the student model training.

\subsubsection{Expected gain of distillation.}
To comprehensively model the gain of instances on KD, we propose considering the difficulty and effect of the instance. Effective instances usually provide informative knowledge to KD. Therefore, we assume the gain of the effective instance $g_s^\text{ef}$ is proportional to the difficulty of the instance $df_s$. Similar instances usually provide limited informative knowledge to KD, which can improve the confidence of extracted knowledge thus showing marginal gains to KD, i.e., $g_s^\text{si}<g_s^\text{ef}$. Incorrect instances may mislead student model training and show the negative gain to KD, i.e., ${\small -g_s^\text{in}<0}$ and ${\small|g_s^\text{in}|<g_s^\text{ef}}$. The detailed implementation of these gains can be found in the \textit{Implementation Details} of Section \textit{Experiments}.

 To elicit effective LLM-based KD, we propose an active learning strategy to select instances with the maximal expected gain, given by,
\begin{equation}\label{eq1}
    {\small\max\nolimits_{p}\mathbb{E}_{s\sim p}[c_{s}^\text{ef}\cdot g_s^\text{ef} + c_{s}^\text{si}\cdot g_s^\text{si} - c_{s}^\text{in}\cdot g_s^\text{in}]},
\end{equation}
where $p\in\mathbb{R}^{\vert\mathcal{S}\vert}$ denotes the probability of selecting one specific instance for active KD. Without loss of generality, we rank instances based on their effective gains, i.e., $g_1^\text{ef}>g_2^\text{ef}>\cdots>g_{\vert\mathcal{S}\vert}^\text{ef}$. In addition, $c_{s}^\text{ef}=1$, $c_{s}^\text{si}=1$, or $c_{s}^\text{in}=1$ represent the indicator, indicating an effective, similar, or incorrect instance for KD, satisfying $c_{s}^\text{ef} + c_{s}^\text{si} + c_{s}^\text{in}=1$.

 In real-world scenarios, it is hard to identify the indicator value of an instance $s$ before eliciting the LLM teacher to predict for it, making it intractable to directly optimize the expected gain in Equation (\ref{eq1}) for LLM-based KD. To this end, we propose maximizing the expected gain of distillation regardless of the specific cases of all instances, which can be achieved by maximizing the minimal expected gain for the instance selection inspired by \cite{baykalrobust}.

\subsubsection{Maximize the minimal gains of distillation.}
 As it is intractable to obtain specific values of indicator of all instances $\{c_{1},\cdots,c_{\vert\mathcal{S}\vert}\}$, we only assume to know the overall number for each type of instances, i.e., {\small$\sum_s c_{s}^\text{ef} =k^\text{ef}$} for effective instances, {\small$\sum_s c_{s}^\text{si} = k^\text{si}$} for similar instances, and {\small$\sum_s c_{s}^\text{in} = k^\text{in}$} for incorrect instances, where {\small$k^\text{ef} + k^\text{si} + k^\text{in} = \vert\mathcal{S}\vert$}. We formulate to maximize the minimal expected gain, %as follows:
\begin{equation}\label{eq2}
{\small
    \max_{p}\min_{c\in\Delta(k^\text{ef},k^\text{si},k^\text{in})}\underbrace{\mathbb{E}_{s\sim p}[c_{s}^\text{ef}\cdot g_s^\text{ef} + c_{s}^\text{si}\cdot g_s^\text{si} - c_{s}^\text{in}\cdot g_s^\text{in}]}_{=\sum_s p_s\cdot[c_{s}^\text{ef}\cdot g_s^\text{ef} + c_{s}^\text{si}\cdot g_s^\text{si} - c_{s}^\text{in}\cdot g_s^\text{in}]:=z(p,c)}},
\end{equation}
where {\small$\Delta(k^\text{ef},k^\text{si},k^\text{in})$} denotes a set assembling all combinations of instances which satisfies {\small$\sum_s c_s^t = k^t$ for $t\in \{\text{ef},\text{si},\text{in}\}$}, and probability $p$ is the active strategy for the instance selection.

To solve the max-min problem in Equation (\ref{eq2}), we provide two theorems to prove the lower bound and upper bound of the expected gain. Then, we illustrate that the equilibrium point is the exact solution for this problem\footnote{Due to page limitation, we only show the main sketch of proofs, where the detailed proofs can be found in the Appendix B.}.

\theoremstyle{definition}
\newtheorem{definition}{Definition}

\begin{definition}[Probability $p^*$] Suppose we have a vector $p^* = (p_1^*,\cdots,p_{\vert\mathcal{S}\vert}^*)$ satisfying 

\begin{equation*}
{\small
p_s^*=\left\{
\begin{aligned}
&1/({H_{k^*}(g_s^\text{si}+g_s^\text{in})})  , &\text{if } 1 \le s\le k^\text{si} + k^\text{in}, \\
&1/({H_{k^*}(g_s^\text{ef}+g_s^\text{in})}) , &\text{if } k^\text{si} + k^\text{in}<s\le k^*,\\
&0  , &\text{if } s> k^*,\\
\end{aligned}
\right.}
\end{equation*}
where 
\begin{equation*}
{\small
H_k=\left\{
\begin{aligned}
&\sum\nolimits_{s=1}^{k}\frac{1}{g_s^\text{si}+g_s^\text{in}}  , &\text{if } 1 \le k \le k^\text{si} + k^\text{in}, \\
&H_{k^\text{si} + k^\text{in}}+ \sum\nolimits_{s=1}^{k}\frac{1}{g_s^\text{ef}+g_s^\text{in}} , &\text{if } k^\text{si} + k^\text{in}< k \le k^*.\\
\end{aligned}
\right.}
\end{equation*}
\end{definition}
\noindent We have $p^*$ is a probability distribution as {\small$p_s^*\ge 0$} and
{\small
$\sum\nolimits_{s=1}^{|\mathcal{S}|} p_s^* = \frac{1}{H_{k^*}}[\sum\nolimits_{s=1}^{k^\text{si} + k^\text{in}} \frac{1}{g_s^\text{si}+g_s^\text{in}} +\sum\nolimits_{  s = k^\text{si} + k^\text{in}+1 }^{k^*}\frac{1}{g_s^\text{ef}+g_s^\text{in}}] = 1.$}

To illustrate the property of the probability $p^*$, we prove its relation to a lower bound of the expected gain $z(p^*,c)$ in Equation (\ref{eq2}), which is detailed as follows.

\newtheorem{theorem}{Theorem}

\begin{theorem}\label{lm1}
The expected gain $z(p,c)$ has a constant lower bound with $p = p^*$, i.e.,
\begin{equation*}
    {\small\min_{c\in\Delta(k^\text{ef},k^\text{si},k^\text{in})}\mathbb{E}_{s\sim p^*}[c_{s}^\text{ef}\cdot g_s^\text{ef} + c_{s}^\text{si}\cdot g_s^\text{si} - c_{s}^\text{in}\cdot g_s^\text{in}] = \Gamma(k^*)}
\end{equation*}
where {\small$\Gamma(k^*) = [k^\text{ef}+k^\text{si}-N+G_{k^*}]/H_{k^*}$} and
{\small\begin{equation*}
G_k\!=\!\left\{
\begin{aligned}
&\sum\nolimits_{s=1}^{k} g_s^\text{si} /(g_s^\text{si}+g_s^\text{in})  , &\text{if } 1 \le k \le k^\text{si} + k^\text{in}, \\
&G_{k^\text{si} + k^\text{in}} \!+\! \sum\nolimits_{s=1}^{k}g_s^\text{ef}/(g_s^\text{ef}+g_s^\text{in})  , &\text{if } k^\text{si} + k^\text{in}< k \le k^*.\\
\end{aligned}
\right.
\end{equation*}}
\end{theorem}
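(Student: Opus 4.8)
The plan is to evaluate the inner minimization in closed form: exhibit one combination $c$ that attains $\Gamma(k^*)$, and show every admissible $c\in\Delta(k^\text{ef},k^\text{si},k^\text{in})$ obeys $z(p^*,c)\ge\Gamma(k^*)$. Since $p_s^*=0$ for $s>k^*$, only the labels of the first $k^*$ instances affect $z(p^*,c)$; write $A=\{1,\dots,k^\text{si}+k^\text{in}\}$ and $B=\{k^\text{si}+k^\text{in}+1,\dots,k^*\}$ for the two blocks on which $p^*$ takes its two functional forms (generically $k^*>k^\text{si}+k^\text{in}$, the boundary case $B=\emptyset$ being immediate).

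First I would prove a ``swap'' normalization lemma: for any $c$, if some $s_0>k^*$ is not labelled \emph{effective} then, since $k^\text{ef}=N-k^\text{si}-k^\text{in}>N-k^*$ forces an effective label onto some $s_1\le k^*$, exchanging the labels of $s_0$ and $s_1$ leaves the $s_0$ term untouched (its probability is $0$) and strictly lowers the $s_1$ term because $g_{s_1}^\text{ef}$ exceeds both $g_{s_1}^\text{si}$ and $-g_{s_1}^\text{in}$. Iterating, the minimizing $c$ may be assumed to label every $s>k^*$ as effective, which forces the label counts on $A\cup B$ to be exactly $k^\text{in}$ incorrect, $k^\text{si}$ similar, and $k^*-k^\text{si}-k^\text{in}$ effective.

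Next I would decompose $z(p^*,c)$ against the ``all-incorrect'' baseline on $A\cup B$. Using $\tfrac{g^\text{in}}{g^\text{si}+g^\text{in}}=1-\tfrac{g^\text{si}}{g^\text{si}+g^\text{in}}$ on $A$ and the analogous identity on $B$, the baseline $-\sum_{s\le k^*}p_s^*g_s^\text{in}$ collapses to $(G_{k^*}-k^*)/H_{k^*}$. For the ``upgrade'' terms one verifies the equalizing identities that motivate $p^*$: relabelling any $s\in A$ from incorrect to similar, or any $s\in B$ from incorrect to effective, raises $z$ by exactly $1/H_{k^*}$; relabelling $s\in A$ to effective instead costs $\tfrac{1}{H_{k^*}}+\delta_s$ with $\delta_s=\tfrac{g_s^\text{ef}-g_s^\text{si}}{H_{k^*}(g_s^\text{si}+g_s^\text{in})}>0$, while relabelling $s\in B$ to similar costs $\tfrac{1}{H_{k^*}}-\eta_s$ with $\eta_s=\tfrac{g_s^\text{ef}-g_s^\text{si}}{H_{k^*}(g_s^\text{ef}+g_s^\text{in})}>0$. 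Summing over the forced counts gives $z(p^*,c)=\Gamma(k^*)+\sum_{s\in A:\,c_s^\text{ef}=1}\delta_s-\sum_{s\in B:\,c_s^\text{si}=1}\eta_s$, where $\Gamma(k^*)=(G_{k^*}-k^\text{in})/H_{k^*}=(k^\text{ef}+k^\text{si}-N+G_{k^*})/H_{k^*}$ is the value of the assignment that puts all effective labels on $B$ and all similar/incorrect labels on $A$; this simultaneously gives the upper bound $\min_c z(p^*,c)\le\Gamma(k^*)$ and the stated closed form.

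For the matching lower bound, observe that $B$ has only $|B|=k^*-k^\text{si}-k^\text{in}$ slots while carrying $|B|$ effective labels in total, so the number of effective labels spilled onto $A$ is at least the number of similar labels placed on $B$. Combining this with the monotonicity built into the choice of $k^*$---which guarantees $\min_{s\in A}\delta_s\ge\max_{s\in B}\eta_s$ (immediate, e.g., when the three gain profiles are comonotone multiples of $df_s$)---forces $\sum_{s\in A}\delta_s\,c_s^\text{ef}\ge\sum_{s\in B}\eta_s\,c_s^\text{si}$, hence $z(p^*,c)\ge\Gamma(k^*)$ after the normalization and therefore for every admissible $c$. The single delicate point is this last comparison, i.e.\ checking that $k^*$ is small enough that no cross-reassignment between $A$ and $B$ lowers the objective; everything else is bookkeeping with the harmonic-type partial sums $H_k$ and $G_k$.
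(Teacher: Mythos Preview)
Your approach mirrors the paper's: both substitute $p^*$, normalize by pushing effective labels to indices $s>k^*$ (where $p_s^*=0$), identify the minimizing assignment as ``effective on $B$, similar/incorrect on $A$'', and evaluate it to $\Gamma(k^*)$. Your baseline-plus-upgrade decomposition is a reparametrization of the paper's direct expansion of $z(p^*,c)$; in the paper's notation the per-slot upgrade costs are exactly the coefficients $\frac{g_s^\text{ef}+g_s^\text{in}}{g_s^\text{si}+g_s^\text{in}}$, $1$, $\frac{g_s^\text{si}+g_s^\text{in}}{g_s^\text{ef}+g_s^\text{in}}$ appearing in their displayed formula, and your $\delta_s,\eta_s$ are precisely these coefficients minus~$1$.

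The one substantive difference is how optimality of the assignment is argued. The paper invokes the single-index inequality
\[
\frac{g_s^\text{ef}+g_s^\text{in}}{g_s^\text{si}+g_s^\text{in}}-1 \;>\; 1-\frac{g_s^\text{si}+g_s^\text{in}}{g_s^\text{ef}+g_s^\text{in}}
\]
and concludes that effective labels belong on $B$. You instead combine the counting identity $\lvert\{s\in A:c_s^\text{ef}=1\}\rvert\ge\lvert\{s\in B:c_s^\text{si}=1\}\rvert$ (which follows from $a_\text{ef}=|B|-b_\text{ef}=b_\text{si}+b_\text{in}$) with the cross-block bound $\min_{A}\delta_s\ge\max_{B}\eta_s$. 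You are right that this cross-block comparison is the crux and does not follow from the same-index inequality alone; under the paper's concrete gain profile $g_s^\text{si}=g_s^\text{in}=g_s^\text{ef}/2$ one has $\delta_s\equiv 1/(2H_{k^*})>1/(3H_{k^*})\equiv\eta_s$, so both arguments go through, and your flagging of this as the ``single delicate point'' is accurate.
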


\renewcommand{\proofname}{Proof sketch}

\begin{proof}

Assigning {\small$p=p^*$} in expected gain {\small$z(p,c)$}, we can reformulate the expected gain {\small$z(p^*,c)$} as
{\small\begin{align*}
   & z(p^*,c)   = \frac{1}{H_{k^*}}\Bigl[ \sum\nolimits_{s=1}^{ k^\text{si}+k^\text{in}} \left(c_{s}^\text{ef}\frac{g_s^\text{ef}+g_s^\text{in}}{g_s^\text{si}+g_s^\text{in}} + c_{s}^\text{si}\cdot 1 - \frac{g_s^\text{in}}{g_s^\text{si} + g_s^\text{in}}\right) \\
    & \quad\quad   + \sum\nolimits_{s= k^\text{si}+k^\text{in}+1}^{ k^* } \left(c_{s}^\text{ef}\cdot 1 + c_{s}^\text{si}\frac{g_s^\text{si}+g_s^\text{in}}{g_s^\text{ef}+g_s^\text{in}} - \frac{g_s^\text{in}}{g_s^\text{ef} + g_s^\text{in}}\right) \Bigl]. 
\end{align*}}

To minimize the gain {\small$z(p^*,c)$}, we should assign effective instances ({\small$c_{s}^\text{ef}=1$}) into {\small$[k^\text{si} + k^\text{in},|\mathcal{S}|]$}, similar instances ({\small$c_{s}^\text{si}=1$}) into {\small$[1, k^\text{si}+k^\text{in}]$}, and incorrect instances into rest place of {\small$[1,|\mathcal{S}|]$}. Therefore, we have the minimal gain {\small$z(p^*,c)$} as 
{\small\begin{align*}
    &\min_{c\in\Delta(k^\text{ef},k^\text{si},k^\text{in})} z(p^*,c)  = \frac{1}{H_{k^*}}\Bigl[0 + k^\text{si} + (k^* - k^\text{si} - k^\text{in}) + 0 \\ 
       & - \sum\nolimits_{s=1}^{k^\text{si}+k^\text{in}}\frac{g_s^\text{in}}{g_s^\text{si} + g_s^\text{in}}  - \sum\nolimits_{s= k^\text{si}+k^\text{in}+1}^{k^*}\frac{g_s^\text{in}}{g_s^\text{ef} + g_s^\text{in}} \Bigl] \\
       & = \frac{1}{H_{k^*}}[k^\text{ef}+k^\text{si}-N+G_{k^*}],
\end{align*}}
where  {\small$\Gamma(k^*) = [k^\text{ef}+k^\text{si}-N+G_{k^*}]/H_{k^*}$}.
\end{proof}

To explore the gain $z(p,c)$ w.r.t. the combination $c$, we make a specific definition of $\hat{c}$ as follows.
\begin{definition}[Combination $\hat{c}^\text{ef}$, $\hat{c}^\text{si}$, and $\hat{c}^\text{in}$] Suppose we have three vectors {\small$ \hat{c}^\text{t} = (\hat{c}_1^\text{t},\cdots,\hat{c}_{|\mathcal{S}|}^\text{t})$} for ${\small t\in\{\text{ef},\text{si},\text{in}\}}$ as 
\begin{equation*}{\small
\hat{c}^\text{ef}_s\!=\!\left\{
\begin{aligned}
&0  , &\text{if } 1 \le s\le k^\text{si} + k^\text{in}, \\
&\frac{\Gamma(k^*)+g_s^\text{in}}{g_s^\text{si}+g_s^\text{in}}  , &\text{if } k^\text{si} + k^\text{in}<s\le k^*,\\
&\Bigl[k^\text{ef}\!-\!\!\!\!\!\sum_{s = k^\text{si} \!+\! k^\text{in}}^{k^*}\!\!\!\!\frac{\Gamma(k^*)\!+\!g_s^\text{in}}{g_s^\text{si}\!+\!g_s^\text{in}}\Bigl]/(|\mathcal{S}|\!-\!k^*) , &\text{if } s > k^*,\\
\end{aligned}
\right.}
\end{equation*}
\begin{equation*}
{\small
\hat{c}^\text{si}_s\!=\!\left\{
\begin{aligned}
& \frac{\Gamma(k^*)+g_s^\text{in}}{g_s^\text{ef}+g_s^\text{in}}, &\text{if } 1 \le s\le k^\text{si} + k^\text{in}, \\
& 0 , &\text{if } k^\text{si} + k^\text{in}<s\le k^*,\\
&\Bigl[k^\text{si}\!-\!\!\!\sum_{s = 1}^{k^\text{si} \!+\! k^\text{in}}\!\!\frac{\Gamma(k^*)\!+\!g_s^\text{in}}{g_s^\text{ef}+g_s^\text{in}}\Bigl]/(|\mathcal{S}|\!-\!k^*) , &\text{if } s > k^*,\\
\end{aligned}
\right.}
\end{equation*}
and ${\small \hat{c}^\text{in}_s = 1 - \hat{c}^\text{ef}_s - \hat{c}^\text{si}_s}$. 

We have ${\small \hat{c}:=(\hat{c}^\text{ef}, \hat{c}^\text{si}, \hat{c}^\text{in})\in \Delta(k^\text{ef},k^\text{si},k^\text{in})}$ that is a feasible combination as ${\small\sum_{s\le |\mathcal{S}|}\hat{c}^\text{ef}_s = k^\text{ef}}$, ${\small\sum_{s\le |\mathcal{S}|}\hat{c}^\text{si}_s = k^\text{si}}$, and {\small$\sum_{s\le |\mathcal{S}|}\hat{c}^\text{in}_s= k^\text{in}$} are established. 
\end{definition}

To illustrate the property of the combination $\hat{c}$, we prove that its relation to an upper bound of the expected gain $z(p,\hat{c})$ in Equation (\ref{eq2}), which is detailed as follows.
\begin{theorem}\label{lm2}
The expected gain $z(p,c)$ has a constant upper bound with $c = \hat{c}$, i.e.,
\begin{equation*}
  {\small
  \max_{p}\mathbb{E}_{s\sim p}[\hat{c}_{s}^\text{ef}\cdot g_s^\text{ef} + \hat{c}_{s}^\text{si}\cdot g_s^\text{si} - \hat{c}_{s}^\text{in}\cdot g_s^\text{in}] = \Gamma(k^*)}
\end{equation*} 
\begin{equation*}
{\small
k^* = \text{argmax}_{s=1,\cdots,|\mathcal{S}|}\frac{g_s^\text{ef}+k^\text{ef}+k^\text{si}-|\mathcal{S}|}{H_s} }
\end{equation*}
 where {\small$\Gamma(k^*) = [k^\text{ef}+k^\text{si}-N+G_{k^*}]/H_{k^*}$ and $g_1>\cdots>g_{\vert\mathcal{S}\vert}$}.
\end{theorem}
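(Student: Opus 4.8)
The plan is to establish Theorem~\ref{lm2} as the matching ``upper bound'' counterpart to Theorem~\ref{lm1}, so that together with the equilibrium argument they pin down the value of the max-min problem in Equation~(\ref{eq2}). Concretely, I would fix the combination $c = \hat{c}$ as in the preceding definition and show that the inner maximization $\max_p z(p,\hat{c})$ is constant, equal to $\Gamma(k^*)$, independent of $p$. The natural route is to compute, for each instance $s$, the per-instance coefficient $\phi_s := \hat{c}_s^\text{ef}\cdot g_s^\text{ef} + \hat{c}_s^\text{si}\cdot g_s^\text{si} - \hat{c}_s^\text{in}\cdot g_s^\text{in}$, since $z(p,\hat{c}) = \sum_s p_s \phi_s$. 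The key claim is that $\hat{c}$ was engineered precisely so that $\phi_s = \Gamma(k^*)$ for every $s$ in the support $\{1,\dots,k^*\}$, while $\phi_s \le \Gamma(k^*)$ for $s > k^*$; then $\max_p \sum_s p_s \phi_s$ is attained on the support and equals $\Gamma(k^*)$.

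I would carry this out in three blocks. First, for $1 \le s \le k^\text{si}+k^\text{in}$: here $\hat{c}_s^\text{ef}=0$, $\hat{c}_s^\text{si} = (\Gamma(k^*)+g_s^\text{in})/(g_s^\text{ef}+g_s^\text{in})$, so $\hat{c}_s^\text{in} = 1 - \hat{c}_s^\text{si}$, and a short algebraic simplification of $g_s^\text{si}\hat{c}_s^\text{si} - g_s^\text{in}\hat{c}_s^\text{in}$ should collapse to exactly $\Gamma(k^*)$ (the denominators $g_s^\text{ef}+g_s^\text{in}$ cancel against the numerator $\Gamma(k^*)+g_s^\text{in}$ after distributing). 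Second, for $k^\text{si}+k^\text{in} < s \le k^*$: symmetrically $\hat{c}_s^\text{si}=0$, $\hat{c}_s^\text{ef} = (\Gamma(k^*)+g_s^\text{in})/(g_s^\text{si}+g_s^\text{in})$, and the same cancellation gives $\phi_s = \Gamma(k^*)$. Third, for $s > k^*$: both $\hat{c}_s^\text{ef}$ and $\hat{c}_s^\text{si}$ are the (constant in $s$, possibly) leftover masses $[k^\text{ef} - \cdots]/(|\mathcal{S}|-k^*)$ and $[k^\text{si}-\cdots]/(|\mathcal{S}|-k^*)$; here I would bound $\phi_s \le \Gamma(k^*)$ using the ordering $g_1 > \cdots > g_{|\mathcal{S}|}$ (so the gains past $k^*$ are the smallest) together with the extremal definition $k^* = \text{argmax}_s (g_s^\text{ef}+k^\text{ef}+k^\text{si}-|\mathcal{S}|)/H_s$ — this is the place where the choice of $k^*$ actually gets used. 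Once all three blocks are verified, $z(p,\hat{c}) = \sum_{s=1}^{k^*} p_s \Gamma(k^*) + \sum_{s>k^*} p_s \phi_s \le \Gamma(k^*)\sum_s p_s = \Gamma(k^*)$, with equality whenever $p$ is supported on $\{1,\dots,k^*\}$, in particular for $p = p^*$ from the earlier definition; hence $\max_p z(p,\hat{c}) = \Gamma(k^*)$.

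The main obstacle I anticipate is the third block: showing $\phi_s \le \Gamma(k^*)$ for $s > k^*$ is not a pure cancellation but an inequality that must draw on the precise characterization of $k^*$ as an argmax, and it requires checking that the leftover masses $\hat{c}_s^\text{ef}, \hat{c}_s^\text{si}$ are nonnegative (feasibility) — i.e. that $k^\text{ef} \ge \sum_{s=k^\text{si}+k^\text{in}+1}^{k^*} (\Gamma(k^*)+g_s^\text{in})/(g_s^\text{si}+g_s^\text{in})$ and likewise for $k^\text{si}$. I would derive these from the definition of $\Gamma(k^*)$ and $H_{k^*}, G_{k^*}$ by rearranging $\Gamma(k^*) H_{k^*} = k^\text{ef}+k^\text{si}-N+G_{k^*}$ into a statement about partial sums. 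Combining this upper bound (Theorem~\ref{lm2}) with the lower bound (Theorem~\ref{lm1}), both equal to $\Gamma(k^*)$, yields $\max_p \min_c z(p,c) \le z(p^*,\hat{c}) = \Gamma(k^*) \le \min_c z(p^*, c) \le \max_p \min_c z(p,c)$, so $(p^*,\hat{c})$ is a saddle point and $\Gamma(k^*)$ is the exact optimal value, which is the conclusion the paper wants.
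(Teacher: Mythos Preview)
Your proposal is correct and follows essentially the same route as the paper's proof: both rewrite $z(p,\hat c)=\sum_s p_s\phi_s$ with $\phi_s=\hat c_s^{\text{ef}}(g_s^{\text{ef}}+g_s^{\text{in}})+\hat c_s^{\text{si}}(g_s^{\text{si}}+g_s^{\text{in}})-g_s^{\text{in}}$, verify by direct cancellation that $\phi_s=\Gamma(k^*)$ for $1\le s\le k^*$, and for $s>k^*$ bound $\phi_s\le g_s^{\text{ef}}$ (via $g_s^{\text{si}}\le g_s^{\text{ef}}$ and the identity $\Gamma(k^*)H_{k^*}=k^{\text{ef}}+k^{\text{si}}-|\mathcal S|+G_{k^*}$) followed by $g_s^{\text{ef}}\le\Gamma(k^*)$ from the argmax definition of $k^*$ through $\Gamma(k^*)\ge\Gamma(k^*+1)\ge g_{k^*+1}$. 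The only addition in your plan is the explicit feasibility check on the leftover masses, which the paper does not spell out.
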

\begin{proof}
Assign $c= \hat{c}$ in gain $z(p,c)$, we have 
{\small\begin{equation*}
     z(p, \hat{c}) \! =\! \sum_{s=1}^{k^*} p_s  \Gamma(k^*)  + \sum_{s=k^*}^{|\mathcal{S}|} p_s  \underbrace{[\hat{c}_{s}^\text{ef} (g_s^\text{ef}+g_s^\text{in}) + \hat{c}_{s}^\text{si} (g_s^\text{si}+g_s^\text{in}) - g_s^\text{in}] }_{:=t(c_{s}^\text{ef},c_{s}^\text{si},s)}
\end{equation*}}
Furthermore, we reformulate $t(c_{s}^\text{ef},c_{s}^\text{si},s)$ when $s\ge k^*$, and prove that $t(c_{s}^\text{ef},c_{s}^\text{si},s) \le g_s^\text{ef}$ by inequality scaling.
% {\small\begin{align*}
%     & t(c_{s}^\text{ef},c_{s}^\text{si},s) \le \frac{g_s^\text{ef} + g_s^\text{in}}{M-k^*} \cdot \Bigl[ k^\text{ef} + k^\text{si} \\
%             & \quad\quad\quad - \sum\nolimits_{u = k^\text{si} + k^\text{in}}^{k^*}\frac{\Gamma(k^*)+g_s^\text{in}}{g_s^\text{si}+g_s^\text{in}} -\sum\nolimits_{u = 1}^{k^\text{si} + k^\text{in}}\frac{\Gamma(k^*)+g_s^\text{in}}{g_s^\text{ef}+g_s^\text{in}} \Bigl] = g_s^\text{ef}\\
% \end{align*}}

According to the definition of $k^*$,  we have 
{\small\begin{align*}
 &\Gamma(k^*) \!\ge\! \Gamma(k^*\!+\!1) \!=\! \frac{\!k^\text{ef}\!+\!k^\text{si}\!-\!N\!+\!G_{k^*}+g_{k^*+1}/(g_{k^*+1}+l_{k^*+1})}{H_{k^*}+1/(g_{k^*+1}+l_{k^*+1})}\\
 &{\ge}\frac{g_{k^*+1}/(g_{k^*+1}+l_{k^*+1})}{1/(g_{k^*+1}+l_{k^*+1})} = g_{k^*+1} \ge g_s^\text{ef}  \text{ for } (s \ge k^* +1)
\end{align*}}
Therefore, we have ${\small\Gamma(k^*)\ge g_s^\text{ef} \ge t(c_{s}^\text{ef},c_{s}^\text{si},s) \text{ when } s\ge k^*.}$

To maximize $z(p, \hat{c})$, we apparently have $\sum_{s= 1}^{k^*} p_s= 1$ and $p_s=0$ when $s>k^*$. Therefore, we have the conclusion 
% \begin{equation*}
        $\max_{p}\mathbb{E}_{s\sim p}[\hat{c}_{s}^\text{ef}\cdot g_s^\text{ef} + \hat{c}_{s}^\text{si}\cdot g_s^\text{si} - \hat{c}_{s}^\text{in}\cdot g_s^\text{in}] = \Gamma(k^*)$
% \end{equation*}
\end{proof}
According to the above theorems, we can achieve the maximum of minimal expected gain according to the following theorem.
\begin{theorem}\label{lm3}
The maximum of the minimal expected gain $z(p,c)$ in Equation (\ref{eq2}) can be achieved as $\Gamma(k^*)$ in this max-min game when ${\small p^* = (p_1^*,\cdots,p_{\vert\mathcal{S}\vert}^*)}$  with {\small$ k^* = \text{argmax}_{s=1,\cdots,|\mathcal{S}|}{[g_s^\text{ef}+k^\text{ef}+k^\text{si}-|\mathcal{S}|]}/{H_s}$}  as 
{\small
\begin{align*}
    &\Gamma(k^*) = \min_{c\in(k^\text{ef},k^\text{si},k^\text{in})}\mathbb{E}_{s\sim p^*}[c_{s}^\text{ef}\cdot g_s^\text{ef} + c_{s}^\text{si}\cdot g_s^\text{si} - c_{s}^\text{in}\cdot g_s^\text{in}] \\
    &\le \max_{p}\min_{c\in(k^\text{ef},k^\text{si},k^\text{in})}\mathbb{E}_{s\sim p}[c_{s}^\text{ef}\cdot g_s^\text{ef} + c_{s}^\text{si}\cdot g_s^\text{si} - c_{s}^\text{in}\cdot g_s^\text{in}]\\
    & \le \max_{p} \mathbb{E}_{s\sim p}[\hat{c}_{s}^\text{ef}\cdot g_s^\text{ef} + \hat{c}_{s}^\text{si}\cdot g_s^\text{si} - \hat{c}_{s}^\text{in}\cdot g_s^\text{in}] = \Gamma(k^*).
\end{align*}}
\end{theorem}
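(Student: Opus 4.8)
The plan is to prove Theorem~\ref{lm3} purely as a sandwich between the lower bound of Theorem~\ref{lm1} and the upper bound of Theorem~\ref{lm2}, with no fresh estimates required. First I would fix $k^*$ once and for all to be the maximizer ${\small k^* = \mathrm{argmax}_{s=1,\cdots,|\mathcal{S}|}[g_s^\text{ef}+k^\text{ef}+k^\text{si}-|\mathcal{S}|]/H_s}$ from Theorem~\ref{lm2}, and let $p^*$ and $\hat{c}$ be the probability vector and the combination attached to this particular $k^*$ in the two preceding definitions. The point is that the two preceding theorems are consistent under this single choice: Theorem~\ref{lm1} gives $\min_{c\in\Delta(k^\text{ef},k^\text{si},k^\text{in})} z(p^*,c)=\Gamma(k^*)$ for the $p^*$ built from $k^*$, and Theorem~\ref{lm2} gives $\max_p z(p,\hat{c})=\Gamma(k^*)$ for the $\hat{c}$ built from this argmax $k^*$, and the two occurrences of $\Gamma(k^*)$ are literally the same number.

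Second, I would record the two elementary monotonicity facts that turn these equalities into bounds on the max-min value. Since $p^*$ is one feasible selection policy, $\min_{c} z(p^*,c)\le \max_{p}\min_{c} z(p,c)$. Since $\hat{c}\in\Delta(k^\text{ef},k^\text{si},k^\text{in})$ is one feasible adversary combination, $\min_{c} z(p,c)\le z(p,\hat{c})$ for every $p$, and taking the maximum over $p$ on both sides preserves the inequality, i.e. $\max_{p}\min_{c} z(p,c)\le \max_{p} z(p,\hat{c})$. Chaining these two relations with the two equalities supplied by Theorems~\ref{lm1} and~\ref{lm2} yields
\begin{equation*}
{\small
\Gamma(k^*)=\min_{c} z(p^*,c)\ \le\ \max_{p}\min_{c} z(p,c)\ \le\ \max_{p} z(p,\hat{c})=\Gamma(k^*),}
\end{equation*}
so every inequality in the chain collapses to an equality; the max-min value of Equation~(\ref{eq2}) is therefore $\Gamma(k^*)$, attained at $p=p^*$, and as a by-product $(p^*,\hat{c})$ is a saddle point of $z$.

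The proof itself carries essentially no obstacle, since all the analytic work lives in Theorems~\ref{lm1} and~\ref{lm2}. The only step that needs genuine care is the bookkeeping in the first paragraph: one must be sure that the $k^*$ for which $p^*$ achieves the lower bound in Theorem~\ref{lm1} is \emph{exactly} the argmax $k^*$ for which $\hat{c}$ achieves the upper bound in Theorem~\ref{lm2}. This is immediate here because the statement and proof of Theorem~\ref{lm1} never invoke any optimality of $k^*$ — it holds verbatim for the specific $k^*$ named in Theorem~\ref{lm2} — so the two $\Gamma(k^*)$ values coincide and the sandwich closes without further hypotheses. I would also briefly note that the weak-duality inequality $\max_p\min_c z\le \min_c\max_p z$ is not needed: both bounds are obtained constructively from the explicit $p^*$ and $\hat{c}$.
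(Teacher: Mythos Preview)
Your proposal is correct and matches the paper's approach exactly: the paper presents Theorem~\ref{lm3} with the sandwich chain written directly into the statement, relying on Theorem~\ref{lm1} for the left equality and Theorem~\ref{lm2} for the right equality, with the two intermediate inequalities being the trivial monotonicity facts you spell out. Your remark that Theorem~\ref{lm1} never uses the optimality of $k^*$, so the two $\Gamma(k^*)$ values coincide, is a useful clarification that the paper leaves implicit.
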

To gain the subset $\mathcal{B} \subset \mathcal{S}$ for KD, we select instances one by one where each instance is sampled based on the active learning policy $p^*$ in Definition 1 with $k^*$ shown in Theorem 2, the detailed . In summary, we provide an active learning strategy (i.e., active instance selection policy $p^*$) that can maximize the minimal expected gain for the instance selection, contributing to the effective LLM-based KD with limited cost. Compared to \cite{baykalrobust} which focused on binary classification tasks, the proposed active learning strategy in ALKDRec can model the complicated cases for ranking tasks of SBR, i.e., the LLM teacher may make effective, similar, and incorrect predictions for KD in SBR tasks.

% The knowledge distillation process of the proposed method ALKDRec is detailed in Algorithm 1.

\subsection{Experiments and Analysis}
We conduct extensive experiments to evaluate the performance of ALKDRec and answer four research questions. \textbf{RQ1}: Whether ALKDRec outperforms state-of-the-art methods for SBR w.r.t. effectiveness and efficiency?
 \textbf{RQ2}: Whether ALKDRec benefits from the knowledge distilled from the LLM teacher for SBR?
 \textbf{RQ3}: Whether ALKDRec benefits from the active learning strategy for SBR?
 \textbf{RQ4}: How do hyper-parameters in ALKDRec affect its performance for SBR?

\subsubsection{Datasets.} We evaluate ALKDRec and baselines on two real-world datasets, namely Hetrec2011-ML and Amazon-Games. The former dataset contains user ratings of the online movie service MovieLens. The latter dataset is the ‘Video Game’ category of the Amazon dataset, which contains users' reviews on games. All items in these two datasets are attached with title information. For each user, we extract her interactions within one day as a session sorted by timestamps, and we also filter out sessions with less than 5 records. After data preprocessing, we have $12,323$ sessions, $8,475$ items, and $112,034$ interactions for Hetrec2011-ML, and $20,091$ sessions, $26,138$ items, and $132,677$ interactions for Amazon-Games.

\renewcommand{\arraystretch}{1.25}
\begin{table*}[tp]
  \centering
  \fontsize{6.5}{6.8}\selectfont
  \caption{Performance of different methods.  * indicates statistically significant improvement of the proposed method to baseline models on t-test ($p < 0.05$). `Improve' indicates the relative improvements of ALKDRec over the strongest baseline. }
  \label{tab:RSComparison}
% \begin{tabular}{c|c|cccccccc|c|cc}
\begin{tabular}{c|c|c|cc|cccccc|c|c}\midrule\midrule
Datasets                        & \multicolumn{1}{l|}{Backbone} & Metric    & Student Rec & Teacher Rec & DE            & FTD              & HTD           & unKD    & DSL           & DLLM2Rec      & ALKDRec      & Improve.        \\\midrule\midrule
\multirow{12}{*}{Hetrec2011-ML} & \multirow{4}{*}{FPMC}          & recall@5  & 0.00730     & 0.01339     & 0.00690       & \underline{ 0.00933}    & 0.00893       & 0.00527 & 0.00527       & 0.00933       & \textbf{0.01258*} & 34.78\%         \\
                                &                                & ndcg@5    & 0.00379     & 0.00781     & 0.00358       & 0.00457          & \underline{ 0.00570} & 0.00361 & 0.00356       & 0.00467       & \textbf{0.00747*} & 31.10\%         \\
                                &                                & recall@10 & 0.01055     & 0.02191     & 0.01258       & \underline{ 0.01866}    & 0.01460       & 0.00893 & 0.00893       & 0.01623       & \textbf{0.02150*} & 15.22\%         \\
                                &                                & ndcg@10   & 0.00415     & 0.01002     & 0.00658       & \underline{ 0.00901}    & 0.00723       & 0.00426 & 0.00436       & 0.00731       & \textbf{0.01019*} & 13.06\%         \\\cline{2-13}
                                & \multirow{4}{*}{STAMP}         & recall@5  & 0.00609     & 0.00852     & 0.00852       & 0.00811          & 0.00609       & 0.00487 & \underline{ 0.00852} & 0.00609       & \textbf{0.01014*} & 19.05\%         \\
                                &                                & ndcg@5    & 0.00364     & 0.00446     & \underline{ 0.00529} & 0.00450          & 0.00337       & 0.00309 & 0.00452       & 0.00364       & \textbf{0.00670*} & 26.76\%         \\
                                &                                & recall@10 & 0.01177     & 0.01907     & \underline{ 0.01501} & 0.01298          & 0.01217       & 0.01014 & 0.01298       & 0.01217       & \textbf{0.01623*} & 8.11\%          \\
                                &                                & ndcg@10   & 0.00563     & 0.00937     & \underline{ 0.00729} & 0.00616          & 0.00517       & 0.00509 & 0.00542       & 0.00575       & \textbf{0.00799*} & 9.58\%          \\\cline{2-13}
                                & \multirow{4}{*}{AttMix}        & recall@5  & 0.00122     & 0.00527     & 0.00527       & \textbf{0.00649} & 0.00325       & 0.00365 & 0.00527       & 0.00527       & \underline{0.00609 }         & -6.25\%         \\
                                &                                & ndcg@5    & 0.00067     & 0.00361     & 0.00306       & \underline{ 0.00332}    & 0.00222       & 0.00235 & 0.00306       & 0.00284       & \textbf{0.00370*} & 11.40\%         \\
                                &                                & recall@10 & 0.00406     & 0.00933     & 0.00933       & 0.01014          & 0.00730       & 0.00568 & 0.00933       & \underline{ 0.01055} & \textbf{0.01095*} & 3.84\%          \\
                                &                                & ndcg@10   & 0.00180     & 0.00419     & 0.00409       & 0.00366          & 0.00351       & 0.00262 & 0.00406       & \underline{ 0.00448} & \textbf{0.00547*} & 22.14\%         \\\midrule\midrule
\multirow{12}{*}{Amazon-Games}  & \multirow{4}{*}{FPMC}          & recall@5  & 0.03309     & 0.05773     & 0.02886       & 0.03260          & \underline{ 0.03459} & 0.00921 & 0.03211       & 0.03409       & \textbf{0.03583*} & 3.60\%          \\
                                &                                & ndcg@5    & 0.01857     & 0.03322     & 0.01471       & 0.01869          & 0.01917       & 0.00504 & 0.01802       & \underline{ 0.01954} & \textbf{0.02188*} & 11.98\%         \\
                                &                                & recall@10 & 0.05101     & 0.08311     & 0.04703       & 0.05126          & \underline{ 0.05200} & 0.01643 & 0.04331       & 0.05126       & \textbf{0.05748*} & 10.53\%         \\
                                &                                & ndcg@10   & 0.02199     & 0.03688     & 0.01938       & \underline{ 0.02251}    & 0.02199       & 0.00696 & 0.01891       & 0.02225       & \textbf{0.02610*} & 15.97\%         \\\cline{2-13}
                                & \multirow{4}{*}{STAMP}         & recall@5  & 0.03583     & 0.05375     & 0.02937       & 0.03359          & 0.03085       & 0.01643 & 0.03086       & \underline{ 0.03608} & \textbf{0.03707} & {2.76\%} \\
                                &                                & ndcg@5    & 0.02025     & 0.03091     & 0.01624       & 0.01938          & 0.01689       & 0.00850 & 0.01777       & \underline{ 0.02051} & \textbf{0.02103} & {2.55\%} \\
                                &                                & recall@10 & 0.04902     & 0.07066     & 0.04231       & 0.05051          & 0.04877       & 0.02290 & 0.04505       & \underline{ 0.05051} & \textbf{0.05325*} & 5.42\%          \\
                                &                                & ndcg@10   & 0.02119     & 0.03104     & 0.01724       & \underline{ 0.02246}    & 0.02215       & 0.00934 & 0.01905       & 0.02188       & \textbf{0.02342*} & 4.26\%          \\\cline{2-13}
                                & \multirow{4}{*}{AttMix}        & recall@5  & 0.01916     & 0.04379     & 0.01095       & 0.00325          & 0.01568       & 0.00406 & 0.00487       & \underline{ 0.01866} & \textbf{0.01916} & {2.67\%} \\
                                &                                & ndcg@5    & 0.01144     & 0.02546     & 0.00600       & 0.00206          & 0.00910       & 0.00234 & 0.00280       & \underline{ 0.01103} & \textbf{0.01118} & {1.33\%} \\
                                &                                & recall@10 & 0.02264     & 0.05200     & 0.01717       & 0.00690          & 0.02140       & 0.00649 & 0.00933       & \underline{ 0.02215} & \textbf{0.02936*} & 32.59\%         \\
                                &                                & ndcg@10   & 0.00983     & 0.02315     & 0.00687       & 0.00323          & 0.00977       & 0.00310 & 0.00355       & \underline{ 0.00987} & \textbf{0.01343*} & 36.07\%        \\\midrule\midrule
\end{tabular}
\end{table*}

% \begin{table}
% \caption{Statistics of the experimental datasets} \centering
% % For LaTeX tablesd use
%   \fontsize{9}{10}\selectfont
% \begin{tabular}{p{2.5cm}p{1.1cm}p{1.1cm}p{1.5cm}l}
% \hline\noalign{\smallskip}
%  &\#Sessions &\#Items&\#Interactions  \\
% \noalign{\smallskip}\hline\noalign{\smallskip}
% Hetrec2011-ML & 12,323  & 8,475 & 112,034 \\
% Amazon-Games &  20,091  & 26,138 & 132,677  \\ 
% \noalign{\smallskip} \hline
% \end{tabular} \label{Table_es}
% \end{table}

\subsubsection{Evaluation Protocol and Metrics.} We randomly split sessions into training, validation, and test sets by 6:2:2. For evaluation phase, we adopt the widely used leave-one-out evaluation protocol. Specifically for each session, we hold out the last interacted item for testing. We adopt two widely used evaluation metrics for top-$K$ recommendation:  recall and normalized discounted cumulative gain (ndcg) \cite{sun2022daisyrec}, where $K$ is set as $5/10$ empirically. Experimental results are recorded as the average of five runs.

\subsubsection{Backbone models and baselines}
 Inspired by prior setting \cite{kang2020rrd}, we test KD methods across 3 representative backbone recommenders for SBR. \textbf{FPMC} \cite{rendle2010factorizing} adopts the Markov Chain to capture sequential patterns of users' behaviors.  \textbf{STAMP}  \cite{liu2018stamp} proposes to model users' current interests from the short-term memory of the last clicks. \textbf{AttMix} \cite{zhang2023efficiently} models multi-level reasoning over item transitions from both concept-view and instance-view. For baseline methods, we take the following state-of-the-art \textit{agnostic} KD methods as the baselines for comparison, including \textbf{DE} \cite{kang2020rrd}, \textbf{FTD}\cite{kang2021topology}, \textbf{HTD} \cite{kang2021topology},  \textbf{unKD} \cite{chen2023unbiased}, \textbf{DSL} \cite{wang2024dynamic}, and \textbf{DLLM2Rec} \cite{cui2024distillation}.

% We take the following state-of-the-art \textit{agnostic} KD methods as the baselines for comparison.
% \textbf{DE} \cite{kang2020rrd} exploits “experts” with a selection strategy for distilling the knowledge of teacher’s representations to the student with limited capacity. \textbf{FTD} and \textbf{HTD} \cite{kang2021topology} are topology distillation approaches that transfer the Full/Hierarchical topological structure built upon the relations in the teacher space into the student. \textbf{unKD} \cite{chen2023unbiased} performs popularity-based debiasing KD during the training of the student model. \textbf{DSL} \cite{wang2024dynamic} trains a lightweight and sparse model, which periodically adjusts the significance of each weight and the model's sparsity distribution.  \textbf{DLLM2Rec} \cite{cui2024distillation} is an LLM-based KD method in SBR, where we extract instances with the same number as ours for a fair comparison.

\subsubsection{Implementation Details.}  For an effective instance, the low difficulty implies a less informative contribution to KD in SBR. Therefore, we assign the gain value of $g_s^\text{ef}$ based on its difficulty with an exponential distribution {\small$g_s^\text{ef} = 1/[rank(df_s)]^{\mu}$, where $rank(\cdot)$} denotes the ascending ranking of instance among all instances w.r.t. difficulty, and we set $\mu=10$ empirically. For similar and incorrect instances, we assign them with lower gain values compared to effective instances, i.e., $g_s^\text{si}=g_s^\text{in}= g_s^\text{ef}/2$. 

 We adopt the GPT-4-turbo as the LLM teacher to distill knowledge from $500$ instances for all datasets and backbones except $750$ instances for FPMC in Amazon-Games. We set the number of effective/similar/incorrect instances as 1:5:4 for all datasets by the grid search. For $\alpha_{v}$ in Equation (\ref{distill}), we assign 3/2/1 when the item $v$ in ranking position $[1,5]/(5,15]/(15,25]$. Following \citet{lee2019collaborative}, we set the latent dimensions for the teacher and student recommenders at 100 and 10 across all KD methods for a fair comparison. We set the learning rate as $1\times 10^{-3}$ with Adam optimizer and batch size as $1024$ for all methods.  For baseline methods, we set their parameters as authors’ implementation if they exist, otherwise we tune them to their best. Our code can be found in the Supplementary Material.

\begin{figure} \centering
 \includegraphics[width=0.5\textwidth]{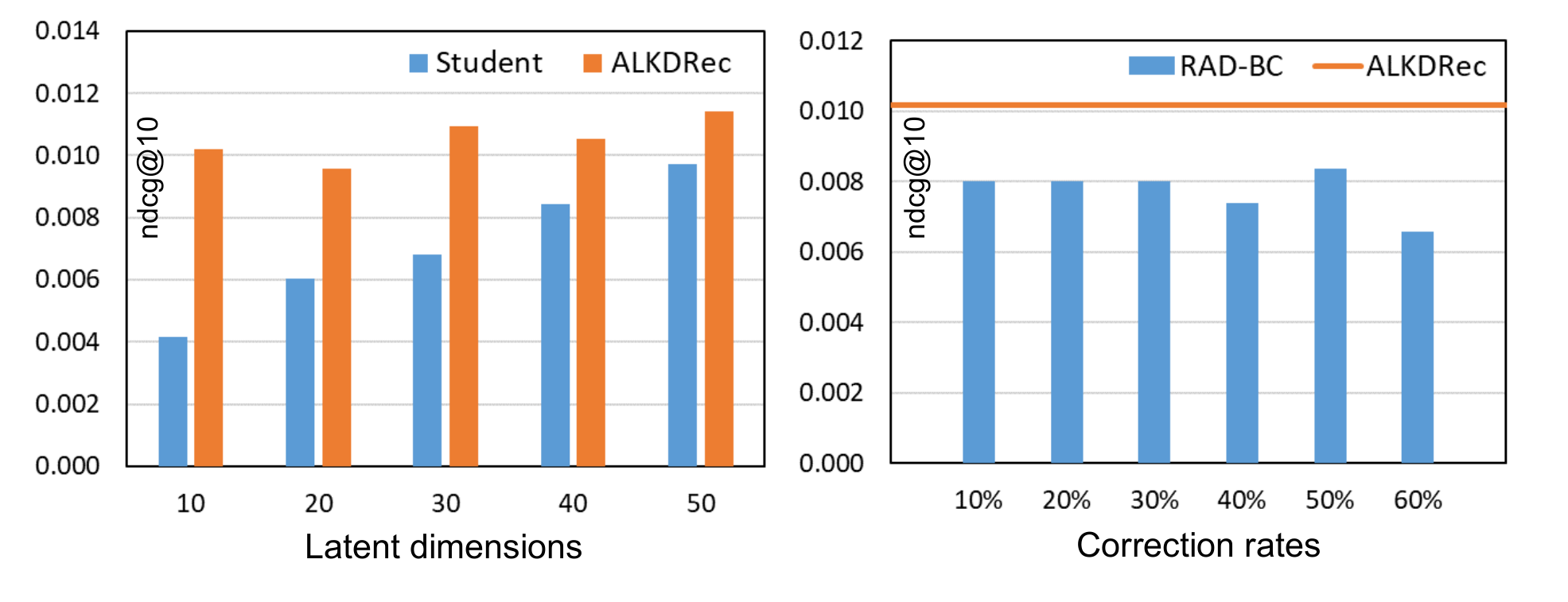}
\caption{ Performance (ndcg@10) with different (a) latent dimensions, and (b) correction rates.} \label{case}
\end{figure}

\subsection{Performance Comparison (RQ1)}
\textbf{Effectiveness.} Table \ref{tab:RSComparison} presents the performance of various methods, including the student recommenders, teacher recommenders, and KD recommendation methods. To highlight the most notable results, we have bolded the best results and underlined the runner up results for KD-based methods on each dataset.  From the experimental results, we can get the following conclusions.
      Firstly, ALKDRec significantly outperforms all KD-based baselines in most cases, which shows the effectiveness of ALKDRec in KD for SBR. Specifically, ALKDRec achieves average improvements of 17.03\%, 9.81\%, and 12.97\% when compared to the best KD-based baselines on the three backbones FPMC, STAMP, and AttMix, respectively.
      Second, KD-based baselines outperform student recommender in most cases, which shows the necessity of KD for SBR. However, they may degrade students' performance sometimes, e.g., STAMP and AttMix perform worse than vanilla students on Amazon-Games. This may be attributed to their limited model size making it hard to capture the complicated patterns, which demand strong generalization capability with large model size, e.g., multi-level reasoning in AttMix. Therefore, distilling this complicated knowledge is daunting for the student and leads to degradation. This also explains why the complicated AttMix backbone performs worse than other backbones with a limited model size.  Third, DLLM2Rec performs well among KD-based baselines but underperforms our ALKDRec, indicating the necessity of our active learning strategy for LLM-based KD. 
      
\smallskip\noindent\textbf{Efficiency}. In the training phase, our ALKDRec only extracts from a subset of instances (e.g., around 44 minutes and 8.6 USD for ChatGPT API in Amazon-Games), which is much faster and computation-saving than training on all samples (e.g., around 1782 minutes and 347.0 USD for ChatGPT API in Amazon-Games). In addition, ALKDRec with a limited model size even outperforms the teacher recommender with a $10X$ model size, indicating that distilling knowledge from LLMs is a promising way with efficient solutions for SBR.

\subsection{Ablation Study (RQ2\&3)}
To assess the effectiveness of ALKDRec's module design, we compare with several variants.
\textbf{(1) TR} distills knowledge from \textbf{T}eacher \textbf{R}ecommender by mimicking its predictions, which fine-tunes the well-trained student recommender for KD. 
\textbf{(2) Random} distills knowledge from LLM by randomly sampling a subset of instances.
\textbf{(3) Hardest/Easiest} distill knowledge from LLM for the hardest/easiest instances for their difficulties.
\textbf{(4) RAD-BC} replaces our active learning strategy in ALKDRec with the \textbf{r}obust \textbf{a}ctive \textbf{d}istillation on \textbf{b}inary \textbf{c}lassification tasks \cite{baykalrobust} that only models the correct and incorrect instances. We set the number of correct instances in RAD-BC as the total number of effective/similar instances in ALKDRec.  Table \ref{table_ablation} shows the performance of ALKDRec and its variant, indicating the following conclusions.

% Please add the following required packages to your document preamble:
% \usepackage{multirow}
\begin{table}[tp]
  \centering
  \fontsize{7}{7}\selectfont
  \caption{Performance of the variants for ablation studies measured by recall (R@10) and ndcg (N@10). }
  \label{table_ablation}
\begin{tabular}{|c|cc|cc|cc|}
        \midrule \multicolumn{7}{|c|}{Hetrec2011-ML}    \\\midrule
Backbone & \multicolumn{2}{c|}{FPMC}            & \multicolumn{2}{c|}{STAMP}           & \multicolumn{2}{c|}{AttMix}          \\\midrule
Metric     & R@10        & N@10          & R@10        & N@10          & R@10        & N@10          \\\midrule
TR         & 0.01907          & 0.00897          & 0.01055          & 0.00476          & 0.00852          & 0.00375          \\
Random     & 0.01907          & 0.00856          & 0.01339          & 0.00755          & 0.00974          & 0.00469          \\
Easiest    & 0.01623          & 0.00693          & 0.01542          & 0.00754          & 0.00933          & 0.00426          \\
Hardest    & 0.01298          & 0.00639          & 0.01582          & 0.00756          & 0.00933          & 0.00439          \\
RAD-BC   & 0.01420          & 0.00658          & 0.01177          & 0.00601          & \textbf{0.01136} & 0.00519          \\
ALKDRec  & \textbf{0.02150} & \textbf{0.01019} & \textbf{0.01623} & \textbf{0.00799} & 0.01095          & \textbf{0.00547} \\\midrule\midrule
\multicolumn{7}{|c|}{Amazon-Games}         \\\midrule
Backbone& \multicolumn{2}{c|}{FPMC}            & \multicolumn{2}{c|}{STAMP}           & \multicolumn{2}{c|}{AttMix}          \\\midrule
Metric     & R@10        & N@10          & R@10        & N@10          & R@10        & N@10          \\\midrule
TR         & 0.05175          & 0.02238          & 0.05200          & 0.02223          & 0.02264          & 0.01002          \\
Random     & 0.05101	      & 0.02239          & 0.05225          & 0.02203          & 0.02538          & 0.01132          \\
Easiest    & 0.05076	      & 0.02238         & 0.05026          & 0.02161          & 0.01966          & 0.00905          \\
Hardest    & 0.04877	      & 0.02133          & 0.05001          & 0.02173          & 0.02239          & 0.00997          \\
RAD-BC   & 0.05200	      & 0.02321          & 0.05225          & 0.02228          & 0.02712          & 0.01210          \\
ALKDRec  & \textbf{0.05748} & \textbf{0.02610} & \textbf{0.05325} & \textbf{0.02342} & \textbf{0.02936} & \textbf{0.01343} \\\midrule         
\end{tabular}
\end{table}

 \textbf{RQ2}: Firstly, ALKDRec consistently outperforms TR, which shows the necessity of distilling knowledge from the LLM teacher. Secondly, Random shows improvements over TR in most cases, confirming the potential of LLM-based KD. In addition, Figure \ref{case} (a) shows the performance of ALKDRec with FPMC varying with different latent dimensions. It indicates that ALKDRec can consistently improve the student recommenders regardless of their sizes, illustrating its generalizability and robustness for LLM-based KD.

\textbf{RQ3}: ALKDRec outperforms the other instance selection strategies (i.e., Random, Easiest, Hardest, and RAD-BC) in most cases, indicating the effectiveness of LLM-based KD by maximizing the minimal gains of distillation for SBR. Firstly, simply selecting the easiest and hardest samples performs badly among these methods. To be specific, the easiest instance leads to redundant (less informative) distillation as its knowledge is likely to be already captured by the student recommender, contributing less to KD for SBR. The hardest instance makes it hard to be correctly predicted by the LLMs, which may mislead the training of student recommenders. Secondly, ALKDRec outperforms RAD-BC in most cases, which is attributed to ALKDRec can avoid both similar and incorrect predictions of LLMs to KD for SBR. Specifically, Figure \ref{case} (b) compares the performances between ALKDRec and RAD-BC with varying correction rates \cite{baykalrobust} when adopting backbone FPMC on Hetrec2011-ML. It shows that ALKDRec consistently outperforms RAD-BC, indicating the superiority of the proposed active learning strategy for LLM-based KD.

\subsection{Hyper-Parameter Study (RQ4)}\label{sec_hyper}

We now investigate the impacts of key parameters for ALKDRec, including the ratios of effective/similar/incorrect instances and the number of sampled instances for KD $\vert\mathcal{B}\vert$. Figure \ref{hyper1} shows the performance of ALKDRec with FPMC across various combinations of effective and similar ratios. It demonstrates that the optimal performance is achieved with effective, similar, and incorrect instances at a ratio of $1:5:4$. Figure \ref{hyper2} shows the performance of ALKDRec with FPMC across different numbers of sampled instances. Specifically, we observe that the performance first improves with more instances, but begins to degrade when exceeding a threshold. We suggest selecting $500$ instances for active KD learning in real-world applications for effective and efficient consideration.

\begin{figure} \centering
 \includegraphics[width=0.5\textwidth]{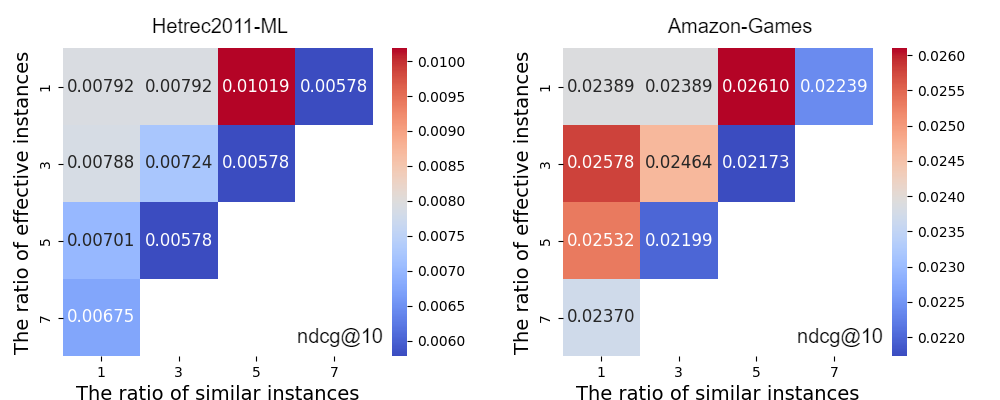}
\caption{ Performance (ndcg@10) of ALKDRec across various effective instance ratios and similar case ratios.} \label{hyper1}
\end{figure}
\begin{figure} \centering
 \includegraphics[width=0.5\textwidth]{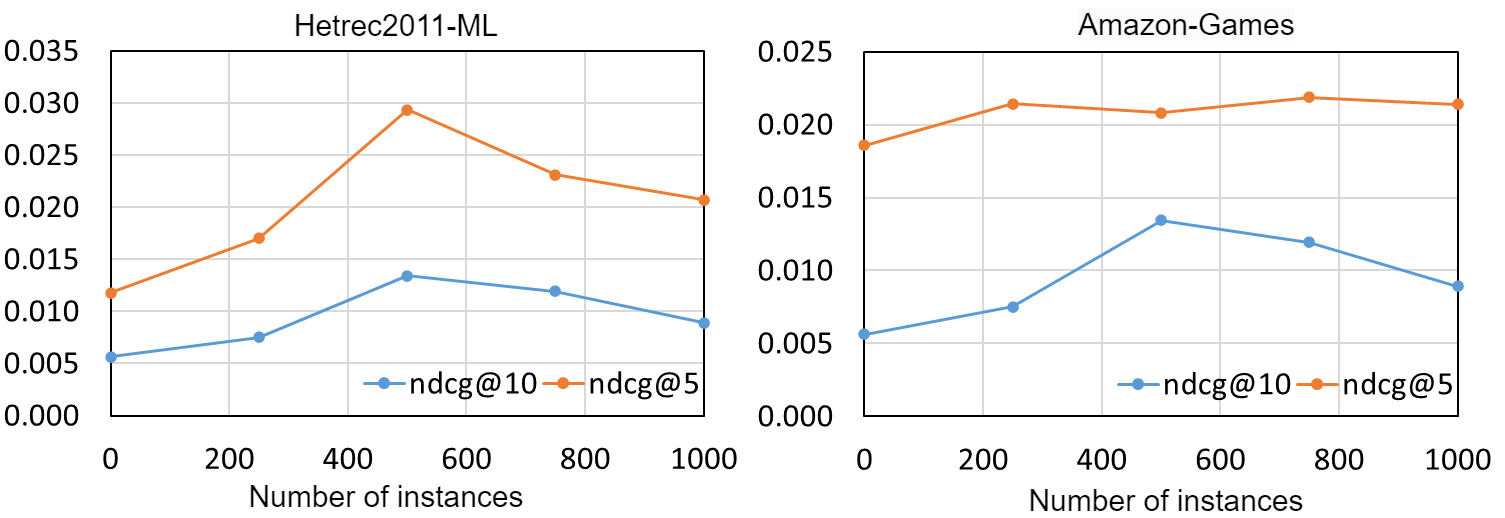}
\caption{ Performance (ndcg@5/10) of ALKDRec vary with different instance numbers for KD.} \label{hyper2}
\end{figure}

\section{Conclusion}
In this paper, we propose an {A}ctive {L}LM-based KD {Rec}ommendation method for a sustainable and effective solution to SBR. For more efficient LLM-based KD, we propose to elicit student learning from a small proportion of instances. To theoretically ensure effective LLM-based KD, we propose maximizing the minimal gains of distillation by selecting effective instances for KD. Experiments on real-world datasets show that our method significantly outperforms state-of-the-art KD methods with representative recommendation backbones. In future work, we will explore how to efficiently distill embedding-level knowledge from LLMs to the student recommenders.

\bibliography{aaai25}

\onecolumn
\appendix
\setcounter{theorem}{0} % 重置定理计数器
\setcounter{definition}{0} % 重置定理计数器

\section{A. Prompts for the LLMs teacher}

\begin{tcolorbox}[breakable, colback=yellow!10, colframe=black!75!white, title= $Prompt_\text{dst}$: Triggers the LLM
teacher for recommendation Recommendation Prompt]
\textit{{\color{red} Question}: You are an AI recommender system, please make accurate recommendations to the user according to his/her behaviors. 
There are some hints for recommendation: {\color{blue}  \{ Hint 1, Hint 2, $\cdots$ \} } 
\newline
The user has bought/watched {\color{blue}\{\%d\}} movies, whose titles are namely {\color{blue}\{\%s\}} (e.g., $<$ID1:Casino Royale$>$  $<$ID2: Batman$>$ ).
Based on these interactions, recommend a ranking list of items for the user from a candidate set: {\color{blue}\{\%s\}}.
\newline
Specifically, rank ALL these candidate items and provide EXACTLY 25 items for recommendations, where the item in the top position indicates the higher recommendation intent.
        Just output WITHOUT ANY OTHER MESSAGE: [$<$ID1$>$,...,$<$ID25$>$]  with $<$ID$?>$ surrounding JUST the ID WITHOUT title and split all IDs with a comma.
\newline {\color{red}ChatGPT}: {\color{blue} Top-$25$ recommendation results.} \newline  \newline 
\textbf{Example Output}:\newline
{\small[$<$ID5130$>$,$<$ID6022$>$,$<$ID7304$>$,$\cdots$,$<$ID1698$>$]}
}
\end{tcolorbox}

\vspace{20pt}

\begin{tcolorbox}[breakable, colback=yellow!10, colframe=black!75!white, title=$Prompt_\text{smr}$: Summarize users’ behavior patterns.]
\textit{{\color{red} Question}: You are an AI assistant, please STRICTLY summarize the main logic of the recommendation results provided by MY recommender system based on users' behaviors. Specifically, the users have interacted with several movies/games (i.e., users' behaviors). The recommender system provides Top-20 items (i.e., recommendation results) based on users' behaviors, where the item in the top position indicates the higher recommendation intent.  We provide {\color{blue}\{\%d\}} cases for you to summarize the relationship between users' behaviors and recommendation results:
\newline
For case {\color{blue}\{\%d\}} (e.g., $1,2,3\cdots$), the user's interactions are {\color{blue}\{\%s\}}. The recommendation results are {\color{blue}\{\%s\}}.
\newline
Do not output the detailed analysis of specific items, and make high-level summarization JUST WITH PROVIDED CASES, e.g., 
\newline 1. Ignore diversity in genres or series. For example, Although the user already owns this one, there would still be interest in the same one, e.g., its the special version.\newline 
 \newline {\color{red}ChatGPT}: {\color{blue}  \{ Hint 1, Hint 2, $\cdots$ \} } \newline  \newline 
\textbf{Example Output}:\newline
\underline{Sequel and Franchise Continuity:} The recommender system often suggests sequels or other movies within the same franchise when a user has interacted with a previous installment. \newline  
\underline{Genre Affinity:} The system tends to recommend movies within the same or similar genres to those previously interacted with. \newline  
\underline{Casting and Crew Commonality:} Recommendations frequently feature movies that have common directors, actors, era or release year, or production companies with films the user has previously enjoyed. \newline  
\underline{In disregard of diversity in genres or series}. For example, Although the user already owns the movie, there would still be interest in the same movie, e.g., its the special version. \newline
}
\end{tcolorbox}

\vspace{5pt}

\section{B: Detailed Proof of theorems}
\subsection{Proof of Theorem 1.}
\theoremstyle{definition}
\begin{definition}[Probability $p^*$] Suppose we have a vector $p^* = (p_1^*,\cdots,p_{|\mathcal{S}|}^*)$ satisfying 

\begin{equation*}
p_s^*=\left\{
\begin{aligned}
&\frac{1}{H_{k^*}(g_s^\text{si}+g_s^\text{in})} , &\text{if } 1 \le s\le k^\text{si} + k^\text{in}, \\
&\frac{1}{H_{k^*}(g_s^\text{ef}+g_s^\text{in})} , &\text{if } k^\text{si} + k^\text{in}< s\le k^*,\\
&0  , &\text{if } s > k^*,\\
\end{aligned}
\right.
\end{equation*}
where 
\begin{equation*}
H_k=\left\{
\begin{aligned}
&\sum_{s=1}^{k}\frac{1}{g_s^\text{si}+g_s^\text{in}} , &\text{if } 1 \le k \le k^\text{si} + k^\text{in}, \\
&H_{k^\text{si} + k^\text{in}}+ \sum_{s=1}^{k}\frac{1}{g_s^\text{ef}+g_s^\text{in}} , &\text{if } k^\text{si} + k^\text{in}< k \le k^*.\\
\end{aligned}
\right.
\end{equation*}

\noindent We have $p^*$ is a probability distribution because $p_s^*\ge 0$ and
\begin{align*}
&\sum_{s=1}^{|\mathcal{S}|} p_s^* = \sum_{s=1}^{k^\text{si} + k^\text{in}} p_s^* +\sum_{  s = k^\text{si} + k^\text{in}+1 }^{k^*} p_s^*\frac{1}{H_{k^*}}\Bigl[\sum_{s=1}^{k^\text{si} + k^\text{in}} \frac{1}{g_s^\text{si}+g_s^\text{in}} +\sum_{  s = k^\text{si} + k^\text{in}+1 }^{k^*}\frac{1}{g_s^\text{ef}+g_s^\text{in}}\Bigl]\frac{H_{k^*}}{H_{k^*}} = 1.
\end{align*}
\end{definition}

To illustrate the property of the probability $p^*$, we prove its relation to a lower bound of the expected gain $z(p^*,c)$ in Equation (2), which is detailed as follows:

\begin{theorem}\label{lm1}
The expected gain $z(p,c)$ has a constant lower bound with $p = p^*$, i.e.,

\begin{equation*}
   \min_{c\in\Delta(k^\text{ef},k^\text{si},k^\text{in})}\mathbb{E}_{s\sim p^*}[c_{s}^\text{ef}\cdot g_s^\text{ef} + c_{s}^\text{si}\cdot g_s^\text{si} - c_{s}^\text{in}\cdot g_s^\text{in}] = \Gamma(k^*)
\end{equation*}
where $\Gamma(k^*) = \frac{k^\text{ef}+k^\text{si}-|\mathcal{S}|+G_{k^*}}{H_{k^*}}$ and
\begin{equation*}
G_k=\left\{
\begin{aligned}
&\sum_{s=1}^{k} \frac{g_s^\text{si}}{g_s^\text{si}+g_s^\text{in}}  , &\text{if } 1 \le k \le k^\text{si} + k^\text{in}, \\
&G_{k^\text{si} + k^\text{in}}+ \sum_{s=1}^{k} \frac{g_s^\text{ef}}{g_s^\text{ef}+g_s^\text{in}}  , &\text{if } k^\text{si} + k^\text{in}< k \le k^*.\\
\end{aligned}
\right.
\end{equation*}
\end{theorem}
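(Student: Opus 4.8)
The plan is to substitute $p = p^*$ directly into $z(p,c) = \sum_s p_s [c_s^\text{ef} g_s^\text{ef} + c_s^\text{si} g_s^\text{si} - c_s^\text{in} g_s^\text{in}]$ and then optimize over the feasible combinations $c \in \Delta(k^\text{ef}, k^\text{si}, k^\text{in})$. First I would split the sum over $s$ into the three regimes dictated by the definition of $p^*$: indices $1 \le s \le k^\text{si} + k^\text{in}$ (where $p_s^* = 1/[H_{k^*}(g_s^\text{si} + g_s^\text{in})]$), indices $k^\text{si} + k^\text{in} < s \le k^*$ (where $p_s^* = 1/[H_{k^*}(g_s^\text{ef} + g_s^\text{in})]$), and indices $s > k^*$ (where $p_s^* = 0$, so those terms vanish regardless of $c$). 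Using the identity $c_s^\text{in} = 1 - c_s^\text{ef} - c_s^\text{si}$ to eliminate $c_s^\text{in}$, each per-index contribution in the first regime becomes $\frac{1}{H_{k^*}}\bigl(c_s^\text{ef}\frac{g_s^\text{ef}+g_s^\text{in}}{g_s^\text{si}+g_s^\text{in}} + c_s^\text{si} - \frac{g_s^\text{in}}{g_s^\text{si}+g_s^\text{in}}\bigr)$ and similarly in the second regime with the roles of $g_s^\text{ef}$ and $g_s^\text{si}$ swapped in the denominator; this is exactly the reformulation displayed in the proof sketch.

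Next I would minimize this expression over $c$. The key observation is that in the first regime the coefficient of $c_s^\text{ef}$ is $\frac{g_s^\text{ef}+g_s^\text{in}}{g_s^\text{si}+g_s^\text{in}} \ge 1$ (since $g_s^\text{ef} \ge g_s^\text{si}$ by the assumed ordering and the gain hierarchy $g_s^\text{si} < g_s^\text{ef}$), while the coefficient of $c_s^\text{si}$ is exactly $1$; so to push the sum down we want to avoid placing effective instances in $[1, k^\text{si}+k^\text{in}]$ and instead fill that block with similar (and then incorrect) instances. Symmetrically, in the second regime the coefficient of $c_s^\text{si}$ is $\ge 1$ and that of $c_s^\text{ef}$ is $1$, so effective instances belong in $(k^\text{si}+k^\text{in}, |\mathcal{S}|]$. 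Because the gains are sorted $g_1^\text{ef} > \cdots > g_{|\mathcal{S}|}^\text{ef}$, this greedy assignment is consistent with a valid $c \in \Delta$: we put all $k^\text{si}$ similar instances and $k^\text{in}$ incorrect instances into $[1, k^\text{si}+k^\text{in}]$, and all $k^\text{ef}$ effective instances into the remaining slots. Plugging this optimal $c$ back in, the $c^\text{ef}$-terms in the first block and the $c^\text{si}$-terms in the second block are zero, the surviving $c^\text{si}$-terms in the first block contribute $k^\text{si}$, the surviving $c^\text{ef}$-terms in the second block contribute $k^* - k^\text{si} - k^\text{in}$, and the leftover $-\frac{g_s^\text{in}}{g_s^\text{si}+g_s^\text{in}}$ and $-\frac{g_s^\text{in}}{g_s^\text{ef}+g_s^\text{in}}$ terms assemble (via $\frac{g_s^\text{in}}{g_s^\text{si}+g_s^\text{in}} = 1 - \frac{g_s^\text{si}}{g_s^\text{si}+g_s^\text{in}}$, etc.) into $-(k^\text{si}+k^\text{in}) + G_{k^*}$ after accounting for the number of terms, giving $\frac{1}{H_{k^*}}[k^\text{si} + (k^* - k^\text{si} - k^\text{in}) - (k^\text{si}+k^\text{in}) + G_{k^*}]$. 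Using $k^\text{ef}+k^\text{si}+k^\text{in} = |\mathcal{S}|$ to simplify $k^* - 2k^\text{in} - k^\text{si}$... I would need to track constants carefully here, but the target is $\Gamma(k^*) = [k^\text{ef}+k^\text{si}-|\mathcal{S}|+G_{k^*}]/H_{k^*}$.

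The main obstacle I anticipate is the bookkeeping of constant terms in the last step: verifying that the sum of the $-g_s^\text{in}/(\cdot)$ terms, once the number of indices in each block is pinned down, collapses cleanly to $-|\mathcal{S}| + G_{k^*} + (\text{block-count corrections})$ so that everything reduces to $k^\text{ef} + k^\text{si} - |\mathcal{S}| + G_{k^*}$. I would handle this by expanding $\sum_{s=1}^{k^\text{si}+k^\text{in}} \frac{g_s^\text{in}}{g_s^\text{si}+g_s^\text{in}} = (k^\text{si}+k^\text{in}) - \sum_{s=1}^{k^\text{si}+k^\text{in}} \frac{g_s^\text{si}}{g_s^\text{si}+g_s^\text{in}}$ and likewise for the second block, so that the $G_{k^*}$ pieces appear directly and the integer counts combine with the $k^\text{si} + (k^*-k^\text{si}-k^\text{in})$ from the positive terms; then $k^* $ cancels and only $k^\text{ef}+k^\text{si}-|\mathcal{S}|$ survives. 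A secondary point worth a line is justifying that the greedy $c$ is genuinely a global minimizer over the discrete set $\Delta$ — this follows because $z(p^*, \cdot)$ is linear (affine) in $c$ and separable across indices, so minimizing reduces to an assignment problem whose cost matrix has the monotone structure that makes the sorted-greedy assignment optimal.
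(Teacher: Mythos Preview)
Your overall plan mirrors the paper's proof exactly: substitute $p^*$, split into the two nonzero regimes, eliminate $c_s^\text{in}$ via $c_s^\text{in}=1-c_s^\text{ef}-c_s^\text{si}$, identify the minimizing assignment, and then collapse the constants to $\Gamma(k^*)$. The bookkeeping concern you flag is handled just as you suggest (rewriting $\frac{g_s^\text{in}}{g_s^{\cdot}+g_s^\text{in}}=1-\frac{g_s^{\cdot}}{g_s^{\cdot}+g_s^\text{in}}$); note only that the negative block actually yields $-k^*+G_{k^*}$, not $-(k^\text{si}+k^\text{in})+G_{k^*}$, after which $k^\text{si}+(k^*-k^\text{si}-k^\text{in})-k^*+G_{k^*}=-k^\text{in}+G_{k^*}=k^\text{ef}+k^\text{si}-|\mathcal{S}|+G_{k^*}$ as desired.

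There is, however, one genuine slip in your optimality argument. You assert that in the second regime the coefficient of $c_s^\text{si}$ is $\ge 1$. It is not: that coefficient is $\frac{g_s^\text{si}+g_s^\text{in}}{g_s^\text{ef}+g_s^\text{in}}$, which is $\le 1$ because $g_s^\text{si}<g_s^\text{ef}$. So per-index in regime~2 a ``similar'' label is actually \emph{cheaper} than an ``effective'' one, and your symmetric greedy reasoning breaks down: it is no longer obvious why the minimizer does not push some similar labels into regime~2 and some effective labels into regime~1. The paper closes this gap with an explicit exchange inequality,
\[
\frac{g_s^\text{ef}+g_s^\text{in}}{g_s^\text{si}+g_s^\text{in}}-1 \;=\; \frac{g_s^\text{ef}-g_s^\text{si}}{g_s^\text{si}+g_s^\text{in}} \;>\; \frac{g_s^\text{ef}-g_s^\text{si}}{g_s^\text{ef}+g_s^\text{in}} \;=\; 1-\frac{g_s^\text{si}+g_s^\text{in}}{g_s^\text{ef}+g_s^\text{in}},
\]
which says the penalty for moving an effective label into regime~1 exceeds the saving from moving a similar label into regime~2; together with the free placement of effective labels into the zero-weight tail $s>k^*$, this pins down the claimed assignment as the minimizer. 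Your final ``assignment problem with monotone cost'' remark is the right instinct, but it needs exactly this comparison to go through, not the (false) coefficient inequality you stated.
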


\begin{proof}

By assigning $p=p^*$ in the expected gain $z(p,c)$, we have 
\begin{align*}
   & z(p^*,c) = \sum_{s=1}^{|\mathcal{S}|} p_s^* \cdot [c_{s}^\text{ef}\cdot g_s^\text{ef} + c_{s}^\text{si}\cdot g_s^\text{si} - c_{s}^\text{in}\cdot g_s^\text{in}] \\
      & = \sum_{s=1}^{|\mathcal{S}|} \frac{1}{H_{k^*}(g_s^\text{si}+g_s^\text{in})}\cdot [c_{s}^\text{ef}\cdot g_s^\text{ef} + c_{s}^\text{si}\cdot g_s^\text{si} - c_{s}^\text{in}\cdot g_s^\text{in}] \\
   & = \sum_{ s=1}^{k^\text{si}+k^\text{in}} \frac{1}{H_{k^*}(g_s^\text{si}+g_s^\text{in})}\cdot [c_{s}^\text{ef}\cdot g_s^\text{ef} + c_{s}^\text{si}\cdot g_s^\text{si} - c_{s}^\text{in}\cdot g_s^\text{in}]  + \sum_{ s = k^\text{si}+k^\text{in}+1}^{ k^* } \frac{1}{H_{k^*}(g_s^\text{ef}+g_s^\text{in})}\cdot [c_{s}^\text{ef}\cdot g_s^\text{ef} + c_{s}^\text{si}\cdot g_s^\text{si} - c_{s}^\text{in}\cdot g_s^\text{in}] \\
   & = \sum_{s=1}^{k^\text{si}+k^\text{in}} \frac{1}{H_{k^*}(g_s^\text{si}+g_s^\text{in})}\cdot [c_{s}^\text{ef} g_s^\text{ef} + c_{s}^\text{si} g_s^\text{si} - g_s^\text{in} + c_{s}^\text{ef}g_s^\text{in} + c_{s}^\text{si}g_s^\text{in}]  + \sum_{ s = k^\text{si}+k^\text{in}+1}^{k^*} \frac{1}{H_{k^*}(g_s^\text{ef}+g_s^\text{in})}\cdot [c_{s}^\text{ef} g_s^\text{ef} + c_{s}^\text{si} g_s^\text{si} - g_s^\text{in} + c_{s}^\text{ef}g_s^\text{in} + c_{s}^\text{si}g_s^\text{in}] \\
    & = \frac{1}{H_{k^*}}\Bigl[\sum_{ s=1}^{k^\text{si}+k^\text{in}} (c_{s}^\text{ef}\frac{g_s^\text{ef}+g_s^\text{in}}{g_s^\text{si}+g_s^\text{in}} + c_{s}^\text{si}\cdot 1 - \frac{g_s^\text{in}}{g_s^\text{si} + g_s^\text{in}})   + \sum_{ s=k^\text{si}+k^\text{in}+1}^{ k^* } (c_{s}^\text{ef}\cdot 1 + c_{s}^\text{si}\frac{g_s^\text{si}+g_s^\text{in}}{g_s^\text{ef}+g_s^\text{in}} - \frac{g_s^\text{in}}{g_s^\text{ef} + g_s^\text{in}})\Bigl] 
\end{align*}
To minimize the gain $z(p^*,c)$, we should first assign $c_{s}^\text{ef} = 1$ when $s\ge k^*$ because $p_s^*=0$ and the coefficients of $c_{s}^\text{ef}$ are always greater than $c_{s}^\text{si}$'s. Then, we should assign the rest $(k^\text{ef} - (N-k^*)) = (k^* - k^\text{si} - k^\text{in})$ effective instances into the interval $[k^\text{si}+k^\text{in},k^*]$, i.e., $c_{s}^\text{ef} = 1$ when $k^\text{si}+k^\text{in}\le s \le k^*$, because the assign effective instances into the interval $[k^\text{si}+k^\text{in},k^*]$ lead to less gain than the interval $[1,k^\text{si}+k^\text{in}]$ due to
\begin{equation*}
    \frac{g_s^\text{ef}+g_s^\text{in}}{g_s^\text{si}+g_s^\text{in}} - 1 =  \frac{g_s^\text{ef} - g_s^\text{si}}{g_s^\text{si}+g_s^\text{in}} > \frac{g_s^\text{ef} - g_s^\text{si}}{g_s^\text{ef}+g_s^\text{in}} = 1 - \frac{g_s^\text{si} +g_s^\text{in}}{g_s^\text{ef}+g_s^\text{in}}.
\end{equation*}
Finally, we assign the redundant instances into the interval $[1, k^\text{si}+k^\text{in}]$ because the interval $[k^\text{si}+k^\text{in},\mathcal{S}]$ is filled with the effective instances. Therefore, the minimal gain $z(p^*,c)$ can be formulated as 
\begin{align*}
    &\min_{c} z(p^*,c)  = \frac{1}{H_{k^*}}\Bigl[0 + k^\text{si} + (k^* - k^\text{si} - k^\text{in}) + 0  - \sum_{ s=1}^{k^\text{si}+k^\text{in}}\frac{g_s^\text{in}}{g_s^\text{si} + g_s^\text{in}}  - \sum_{s= k^\text{si}+k^\text{in}+1}^{ k^* }\frac{g_s^\text{in}}{g_s^\text{ef} + g_s^\text{in}} \Bigl] \\
       & = \frac{1}{H_{k^*}}\Bigl[ k^\text{ef} +k^\text{si}-N + \sum_{ s=1}^{ k^\text{si}+k^\text{in}}\frac{g_s^\text{si}}{g_s^\text{si} + g_s^\text{in}} -\frac{g_s^\text{si} + g_s^\text{in}}{g_s^\text{si} + g_s^\text{in}}  + \sum_{ s= k^\text{si}+k^\text{in}+1}^{ k^* }\frac{g_s^\text{ef}}{g_s^\text{ef} + g_s^\text{in}} - \frac{g_s^\text{ef}+g_s^\text{in}}{g_s^\text{ef} + g_s^\text{in}}  \Bigl] \\
       & = \frac{1}{H_{k^*}}[k^\text{ef}+k^\text{si}-N+G_{k^*}] = \Gamma(k^*)
\end{align*}
\end{proof}

\subsection{Proof of Theorem 2.}
To explore the gain $z(p,c)$ w.r.t. the combination $c$, we make a specific definition of $\hat{c}$ as follows:
\begin{definition}[Combination $\hat{c}^\text{ef}$, $\hat{c}^\text{si}$, and $\hat{c}^\text{in}$] Suppose we have three vectors {\small$ \hat{c}^\text{t} = (\hat{c}_1^\text{t},\cdots,\hat{c}_{|\mathcal{S}|}^\text{t})$} for ${\small t\in\{\text{ef},\text{si},\text{in}\}}$ as 
\begin{equation*}
\hat{c}^\text{ef}_s=\left\{
\begin{aligned}
&0  , &\text{if } 1 \le s\le k^\text{si} + k^\text{in}, \\
&\frac{\Gamma(k^*)+g_s^\text{in}}{g_s^\text{si}+g_s^\text{in}}  , &\text{if } k^\text{si} + k^\text{in}< s\le k^*,\\
&\frac{k^\text{ef}\!-\!\!\sum_{s = k^\text{si} + k^\text{in}}^{k^*}\!\!\frac{\Gamma(k^*)+g_s^\text{in}}{g_s^\text{si}+g_s^\text{in}}}{|\mathcal{S}|-k^*} , &\text{if } s > k^*,\\
\end{aligned}
\right.
\end{equation*}
\begin{equation*}
\hat{c}^\text{si}_s=\left\{
\begin{aligned}
& \frac{\Gamma(k^*)+g_s^\text{in}}{g_s^\text{ef}+g_s^\text{in}}, &\text{if } 1 \le s\le k^\text{si} + k^\text{in}, \\
& 0 , &\text{if } k^\text{si} + k^\text{in}< s\le k^*,\\
&\frac{k^\text{si}\!-\!\sum_{s = 1}^{k^\text{si} + k^\text{in}}\frac{\Gamma(k^*)+g_s^\text{in}}{g_s^\text{ef}+g_s^\text{in}}}{|\mathcal{S}|-k^*} , &\text{if } s > k^*,\\
\end{aligned}
\right.
\end{equation*}
and ${\small \hat{c}^\text{in}_s = 1 - \hat{c}^\text{ef}_s - \hat{c}^\text{si}_s}$. 

We have ${\small \hat{c}:=(\hat{c}^\text{ef}, \hat{c}^\text{si}, \hat{c}^\text{in})\in \Delta(k^\text{ef},k^\text{si},k^\text{in})}$ is a feasible combination as ${\small\sum_{s\le |\mathcal{S}|}\hat{c}^\text{ef}_s = k^\text{ef}}$, ${\small\sum_{s\le |\mathcal{S}|}\hat{c}^\text{si}_s = k^\text{si}}$, and {\small$\sum_{s\le |\mathcal{S}|}\hat{c}^\text{in}_s= k^\text{in}$} are established. 
\end{definition}

To illustrate the property of the combination $\hat{c}$, we prove its relation to an upper bound of the expected gain $z(p,\hat{c})$ in Equation (2), which is detailed as follows.
\begin{theorem}\label{lm2}
The expected gain $z(p,c)$ has a constant upper bound with $c = \hat{c}$, i.e.,
\begin{equation*}
  \max_{p}\mathbb{E}_{s\sim p}[\hat{c}_{s}^\text{ef}\cdot g_s^\text{ef} + \hat{c}_{s}^\text{si}\cdot g_s^\text{si} - \hat{c}_{s}^\text{in}\cdot g_s^\text{in}] = \Gamma(k^*)
\end{equation*}
and 
\begin{equation*}
k^* = \text{argmax}_{s=1,\cdots,|\mathcal{S}|}\frac{g_s^\text{ef}+k^\text{ef}+k^\text{si}-|\mathcal{S}|}{H_s} 
\end{equation*}
 where {\small$\Gamma(k^*) = [k^\text{ef}+k^\text{si}-N+G_{k^*}]/H_{k^*}$ and $g_1>\cdots>g_{|\mathcal{S}|}$}.
\end{theorem}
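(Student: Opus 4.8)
The plan is to establish the stated equality $\max_p \mathbb{E}_{s\sim p}[\hat c_s^\text{ef} g_s^\text{ef} + \hat c_s^\text{si} g_s^\text{si} - \hat c_s^\text{in} g_s^\text{in}] = \Gamma(k^*)$ by decomposing the inner linear functional over $p$ into two ranges of the index $s$ and showing that the coefficient of $p_s$ is exactly $\Gamma(k^*)$ on the first range $1\le s\le k^*$ and is bounded above by $\Gamma(k^*)$ on the tail $s>k^*$; since maximizing a linear functional over the simplex puts all mass on the largest coefficient, this forces the optimal $p$ to be supported on $\{1,\dots,k^*\}$ with value $\Gamma(k^*)$. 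First I would substitute the definitions of $\hat c^\text{ef}_s$, $\hat c^\text{si}_s$ and $\hat c^\text{in}_s=1-\hat c^\text{ef}_s-\hat c^\text{si}_s$ into the objective and rewrite $z(p,\hat c)$ as $\sum_{s=1}^{|\mathcal{S}|} p_s\,[\hat c_s^\text{ef}(g_s^\text{ef}+g_s^\text{in}) + \hat c_s^\text{si}(g_s^\text{si}+g_s^\text{in}) - g_s^\text{in}]$, matching the shorthand $t(c_s^\text{ef},c_s^\text{si},s)$ used in the sketch.

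For $1\le s\le k^\text{si}+k^\text{in}$ one has $\hat c_s^\text{ef}=0$ and $\hat c_s^\text{si}=(\Gamma(k^*)+g_s^\text{in})/(g_s^\text{ef}+g_s^\text{in})$, so the bracket collapses to $(\Gamma(k^*)+g_s^\text{in})-g_s^\text{in}=\Gamma(k^*)$; symmetrically, for $k^\text{si}+k^\text{in}<s\le k^*$ one has $\hat c_s^\text{si}=0$ and $\hat c_s^\text{ef}=(\Gamma(k^*)+g_s^\text{in})/(g_s^\text{si}+g_s^\text{in})$, again giving $\Gamma(k^*)$. This is the routine algebraic core and mirrors how $\hat c$ was reverse-engineered from the equilibrium condition. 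The remaining work is the tail: for $s>k^*$ I must show $t(\hat c_s^\text{ef},\hat c_s^\text{si},s)\le g_s^\text{ef}\le\Gamma(k^*)$. The second inequality is exactly the chain already displayed in the sketch: using the definition $k^* = \operatorname{argmax}_s (g_s^\text{ef}+k^\text{ef}+k^\text{si}-|\mathcal{S}|)/H_s$, optimality at $k^*$ over $k^*+1$ gives $\Gamma(k^*)\ge \Gamma(k^*+1)$, and then a mediant-type inequality (if $(a+c)/(b+d)\le c/d$ whenever $a/b\le c/d$ with positive denominators) applied to $\Gamma(k^*+1)=\bigl(k^\text{ef}+k^\text{si}-N+G_{k^*}+g_{k^*+1}/(g_{k^*+1}+l_{k^*+1})\bigr)/\bigl(H_{k^*}+1/(g_{k^*+1}+l_{k^*+1})\bigr)$ yields $\Gamma(k^*+1)\ge g_{k^*+1}\ge g_s^\text{ef}$ for all $s\ge k^*+1$, where $l$ denotes the relevant $g^\text{si}$ or $g^\text{in}$ term and monotonicity $g_1>\cdots>g_{|\mathcal S|}$ is invoked.

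The first tail inequality $t(\hat c_s^\text{ef},\hat c_s^\text{si},s)\le g_s^\text{ef}$ is where I expect the real obstacle. For $s>k^*$ the values $\hat c_s^\text{ef},\hat c_s^\text{si}$ are the ``leftover mass'' constants $\bigl(k^\text{ef}-\sum_{s=k^\text{si}+k^\text{in}}^{k^*}(\Gamma(k^*)+g_s^\text{in})/(g_s^\text{si}+g_s^\text{in})\bigr)/(|\mathcal{S}|-k^*)$ and the analogous expression for $\hat c^\text{si}$, which do not simplify cleanly; I would bound $t$ by writing $t(\hat c_s^\text{ef},\hat c_s^\text{si},s) = \hat c_s^\text{ef} g_s^\text{ef} + \hat c_s^\text{si} g_s^\text{si} - \hat c_s^\text{in} g_s^\text{in}$, use $\hat c_s^\text{ef}+\hat c_s^\text{si}+\hat c_s^\text{in}=1$ and $g_s^\text{si},g_s^\text{in}\le g_s^\text{ef}$ (so $-\hat c_s^\text{in}g_s^\text{in}\le 0$ is too crude; instead treat it as a convex combination and observe $t\le (\hat c_s^\text{ef}+\hat c_s^\text{si})g_s^\text{ef}\le g_s^\text{ef}$ only once one checks the coefficients are nonnegative and sum to at most one). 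Verifying nonnegativity of $\hat c_s^\text{ef},\hat c_s^\text{si},\hat c_s^\text{in}$ on the tail — i.e. feasibility $\hat c\in\Delta(k^\text{ef},k^\text{si},k^\text{in})$ — is the delicate bookkeeping step and uses precisely the defining maximality of $k^*$; I would isolate it as a lemma. Once $t\le g_s^\text{ef}\le\Gamma(k^*)$ holds on the tail and $=\Gamma(k^*)$ on $\{1,\dots,k^*\}$, the conclusion follows: any $p$ gives $z(p,\hat c)\le\Gamma(k^*)$, with equality attained by any $p$ supported on $\{1,\dots,k^*\}$ (in particular $p^*$), completing the proof and, together with Theorem 1, pinning the max-min value at $\Gamma(k^*)$ as asserted in Theorem 3.
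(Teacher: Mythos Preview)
Your proposal is correct and follows essentially the same route as the paper: rewrite $z(p,\hat c)=\sum_s p_s\,[\hat c_s^\text{ef}(g_s^\text{ef}+g_s^\text{in})+\hat c_s^\text{si}(g_s^\text{si}+g_s^\text{in})-g_s^\text{in}]$, show this coefficient equals $\Gamma(k^*)$ for $s\le k^*$, bound it by $g_s^\text{ef}\le\Gamma(k^*)$ on the tail via the mediant inequality from the optimality of $k^*$, and conclude by placing all mass of $p$ on $\{1,\dots,k^*\}$. The only minor variation is in the tail bound $t\le g_s^\text{ef}$: the paper substitutes the leftover-mass constants and simplifies explicitly (using $\Gamma(k^*)H_{k^*}=k^\text{ef}+k^\text{si}-|\mathcal S|+G_{k^*}$ to collapse the bracket to $|\mathcal S|-k^*$), whereas you reach it via the convex-combination estimate $t\le(\hat c_s^\text{ef}+\hat c_s^\text{si})g_s^\text{ef}\le g_s^\text{ef}$; both ultimately hinge on the same feasibility/nonnegativity of $\hat c$ on the tail that you rightly isolate as the delicate step (and that the paper also leaves implicit).
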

\begin{proof}
Assign $c= \hat{c}$ in the expected gain $z(p,c)$, we have 
\begin{align*}
    & z(p, \hat{c}) = \sum_{s=1}^{|\mathcal{S}|} p_s \cdot [\hat{c}_{s}^\text{ef}\cdot g_s^\text{ef} + \hat{c}_{s}^\text{si}\cdot g_s^\text{si} - \hat{c}_{s}^\text{in}\cdot g_s^\text{in}] \\
    & = \sum_{s=1}^{|\mathcal{S}|} p_s \cdot [\hat{c}_{s}^\text{ef}\cdot g_s^\text{ef} + \hat{c}_{s}^\text{si}\cdot g_s^\text{si} - (1 - \hat{c}_{s}^\text{ef} - \hat{c}_{s}^\text{si})\cdot g_s^\text{in}] = \sum_{ s=1}^{ |\mathcal{S}|} p_s \cdot [\hat{c}_{s}^\text{ef}\cdot (g_s^\text{ef}+g_s^\text{in}) + \hat{c}_{s}^\text{si}\cdot (g_s^\text{si}+g_s^\text{in}) - g_s^\text{in}] \\
    & = \sum_{s = 1}^{k^\text{ef}+k^\text{si}} p_s \cdot \Bigl[0\cdot (g_s^\text{ef}+g_s^\text{in}) + \frac{\Gamma(k^*)+g_s^\text{in}}{g_s^\text{si}+g_s^\text{in}}\cdot (g_s^\text{si}+g_s^\text{in}) - g_s^\text{in}\Bigl] 
    + \sum_{s = k^\text{ef}+k^\text{si}+1}^{k^*} p_s \cdot \Bigl[\frac{\Gamma(k^*)+g_s^\text{in}}{g_s^\text{si}+g_s^\text{in}}\cdot (g_s^\text{ef}+g_s^\text{in}) + 0 \cdot (g_s^\text{si}+g_s^\text{in}) - g_s^\text{in}\Bigl]\\
    &\quad\quad\quad\quad\quad+ \sum_{s=k^*+1}^{|\mathcal{S}|} p_s \cdot [\hat{c}_{s}^\text{ef}\cdot (g_s^\text{ef}+g_s^\text{in}) + \hat{c}_{s}^\text{si}\cdot (g_s^\text{si}+g_s^\text{in}) - g_s^\text{in}] \\
    & = \sum_{s=1}^{k^*} p_s \Gamma(k^*) + \sum_{s=k^*+1}^{|\mathcal{S}|} p_s \underbrace{[\hat{c}_{s}^\text{ef}\cdot (g_s^\text{ef}+g_s^\text{in}) + \hat{c}_{s}^\text{si}\cdot (g_s^\text{si}+g_s^\text{in}) - g_s^\text{in}] }_{:=t(c_{s}^\text{ef},c_{s}^\text{si},u)}
\end{align*}

Furthermore, we reformulate $t(c_{s}^\text{ef},c_{s}^\text{si},u)$ when $s\ge k^*$, 
\begin{align*}
    & t(c_{s}^\text{ef},c_{s}^\text{si},s) \\
    &= (g_s^\text{ef} + g_s^\text{in}) \cdot \Bigl[k^\text{ef}-\sum_{s = k^\text{si} + k^\text{in}}^{k^*}\frac{\Gamma(k^*)+g_s^\text{in}}{g_s^\text{si}+g_s^\text{in}}\Bigl]/(|\mathcal{S}|-k^*)  + (g_s^\text{si} + g_s^\text{in}) \cdot \Bigl[k^\text{si}-\sum_{s = k^\text{si} + k^\text{in}}^{k^*}\frac{\Gamma(k^*)+g_s^\text{in}}{g_s^\text{ef}+g_s^\text{in}}\Bigl]/(|\mathcal{S}|-k^*) -g_s^\text{in}\\
            & \le \frac{(g_s^\text{ef} + g_s^\text{in})}{|\mathcal{S}|-k^*} \cdot \Bigl[k^\text{ef} + k^\text{si}-  \sum_{s = k^\text{si} + k^\text{in}+1}^{k^*}\frac{\Gamma(k^*)+g_s^\text{in}}{g_s^\text{si}+g_s^\text{in}} -\sum_{s = 1}^{k^\text{si} + k^\text{in}}\frac{\Gamma(k^*)+g_s^\text{in}}{g_s^\text{ef}+g_s^\text{in}}\Bigl] -g_s^\text{in}\\
            & =   \frac{(g_s^\text{ef} + g_s^\text{in})}{|\mathcal{S}|-k^*} \cdot \Bigl[k^\text{ef} + k^\text{si}- \Gamma(k^*)\cdot H_{k^*} + \sum_{ s\le k^\text{si}+k^\text{in}}\frac{g_s^\text{si}}{g_s^\text{si} + g_s^\text{in}} -\frac{g_s^\text{si} + g_s^\text{in}}{g_s^\text{si} + g_s^\text{in}}  + \sum_{ k^\text{si}+k^\text{in}\le s\le k^* }\frac{g_s^\text{ef}}{g_s^\text{ef} + g_s^\text{in}} - \frac{g_s^\text{ef}+g_s^\text{in}}{g_s^\text{ef} + g_s^\text{in}}  \Bigl] -g_s^\text{in}\\
       &= \frac{(g_s^\text{ef} + g_s^\text{in})}{|\mathcal{S}|-k^*} \cdot [k^\text{ef} + k^\text{si}- (k^\text{ef}+k^\text{si}-|\mathcal{S}|+G_{k^*}) + G_{k^*} -k^*] -g_s^\text{in} = g_s^\text{ef} + g_s^\text{in} - g_s^\text{in} = g_s^\text{ef}
\end{align*}

According to the definition of $k^*$,  we have 
\begin{align*}
 &\frac{k^\text{ef}+k^\text{si}-|\mathcal{S}|+G_{k^*}}{H_{k^*}} = \Gamma(k^*) \ge \Gamma(k^*+1) \\
 &= \frac{k^\text{ef}+k^\text{si}-|\mathcal{S}|+G_{k^*}+g_{k^*+1}/(g_{k^*+1}+l_{k^*+1})}{H_{k^*}+1/(g_{k^*+1}+l_{k^*+1})}\\
 &\overset{(a)}{\ge}\frac{g_{k^*+1}/(g_{k^*+1}+l_{k^*+1})}{1/(g_{k^*+1}+l_{k^*+1})} = g_{k^*+1} \ge g_{s}  \text{ for } (s \ge k^* +1)
\end{align*}
where $(a)$ is established because if $a/b\ge(a+c)/(b+d)$ then $a/b\ge c/d$ for non-negative values.
Therefore, we have 
\begin{equation*}
    \Gamma(k^*)\ge g_s \ge t(c_{s}^\text{ef},c_{s}^\text{si},s) \text{ when } s\ge k^*.
\end{equation*}

To maximize $z(p, \hat{c})$, we apparently have $\sum_{s= 1}^{k^*} p_s= 1$ and $p_s=0$ when $s>k^*$. Therefore, we have 
\begin{equation*}
        \max_{p}\mathbb{E}_{s\sim p}[\hat{c}_{s}^\text{ef}\cdot g_s^\text{ef} + \hat{c}_{s}^\text{si}\cdot g_s^\text{si} - \hat{c}_{s}^\text{in}\cdot g_s^\text{in}] = \Gamma(k^*)
\end{equation*}
\end{proof}

\section{C. Detailed description of baselines}

We take the following state-of-the-art \textit{agnostic} KD methods as the baselines for comparison.
\begin{itemize}
    \item \textbf{DE} \cite{kang2020rrd} exploits “experts” with a selection strategy for distilling the knowledge of teacher’s representations to the student with limited capacity.
    \item \textbf{FTD} and \textbf{HTD} \cite{kang2021topology} are topology distillation approaches that transfer the Full/Hierarchical topological structure built upon the relations in the teacher space into the student.
    \item \textbf{unKD} \cite{chen2023unbiased} performs popularity-based debiasing KD during the training of the student model. 
    \item   \textbf{DSL} \cite{wang2024dynamic} trains a lightweight and sparse model, which periodically adjusts the significance of each weight and the model's sparsity distribution. 
    \item  \textbf{DLLM2Rec} \cite{cui2024distillation} is an LLM-based KD method based on importance-aware ranking distillation and collaborative embedding distillation. For a fair comparison in terms of efficiency, we extract instances with the same number as ours and remove confidence-aware weights that rely on generating for all items.
\end{itemize}

\begin{algorithm}
\caption{Algorithm of active learning for data selection. }
\label{alg:algorithm}
\textbf{Input}: Session set $\mathcal{S}$, item set $\mathcal{V}$, the well-trained conventional teacher $T_\text{llm}(\cdot)$, hyper-parameter $\mu$, and number of sampled instances $\tau$.
\begin{algorithmic}[1] %[1] enables line numbers
\STATE \# \textbf{Difficulty and gains of each session measured by the conventional teacher $T_\text{llm}(\cdot)$}
\FOR{ $s \in \mathcal{S}$}
\STATE  $df_s = - \frac{1}{|s|}\sum\nolimits_{v\in s}\sigma(\text{Encode}(s)^T\cdot \boldsymbol{e}_{v})$ \quad\quad \# Difficulty.
\STATE $g_s^\text{ef} = 1/[rank(df_s)]^{\mu}$ \quad\quad\# Efficient gain ($g_s^\text{ef}$).
\STATE $g_s^\text{si}=g_s^\text{in}= g_s^\text{ef}/2$ \quad\quad\# Similarity gain ($g_s^\text{in}$) and ineffective gain ($-g_s^\text{in}$).
\ENDFOR
\STATE Ranking instances $\mathcal{S}$ based on their effective gains, i.e., $g_1^\text{ef}>g_2^\text{ef}>\cdots>g_{\vert\mathcal{S}\vert}^\text{ef}$.
\STATE \# \textbf{active instance selection policy $p^*$}
\STATE Calculate the $k^* = \text{argmax}_{s=1,\cdots,|\mathcal{S}|}\frac{g_s^\text{ef}+k^\text{ef}+k^\text{si}-|\mathcal{S}|}{H_s}$ in Theorem 2.
\STATE Calculate the active instance selection policy $p^* = (p_1^*,\cdots,p_{|\mathcal{S}|}^*)$, i.e.,
\begin{equation*}
p_s^*=\left\{
\begin{aligned}
&1/({H_{k^*}(g_s^\text{si}+g_s^\text{in})})  , &\text{if } 1 \le s\le k^\text{si} + k^\text{in}, \\
&1/({H_{k^*}(g_s^\text{ef}+g_s^\text{in})}) , &\text{if } k^\text{si} + k^\text{in}<s\le k^*,\\
&0  , &\text{if } s> k^*,\\
\end{aligned}
\right.
\end{equation*}
where 
\begin{equation*}
H_k=\left\{
\begin{aligned}
&\sum\nolimits_{s=1}^{k}\frac{1}{g_s^\text{si}+g_s^\text{in}}  , &\text{if } 1 \le k \le k^\text{si} + k^\text{in}, \\
&H_{k^\text{si} + k^\text{in}}+ \sum\nolimits_{s=1}^{k}\frac{1}{g_s^\text{ef}+g_s^\text{in}} , &\text{if } k^\text{si} + k^\text{in}< k \le k^*.\\
\end{aligned}
\right.
\end{equation*}
\STATE \# \textbf{Instance selection}
\STATE Initialize $\mathcal{B} = \emptyset$.
\WHILE{$|\mathcal{B}|<\tau$}
\STATE Sample $s\sim p^*$
\IF{$s \notin \mathcal{B}$}
\STATE $\mathcal{B} = \mathcal{B} + \{s\}$
\ENDIF
\ENDWHILE
\RETURN $\mathcal{B}$
\end{algorithmic}
\end{algorithm}

\end{document}